\let\origvec\vec
\documentclass[a4paper,envcountsect]{llncs}

\let\vec\origvec

\usepackage{amsfonts, amsmath, amssymb, alltt, color, pbox, enumerate}
\usepackage{algorithm}
\usepackage[pdftex]{graphicx}
\usepackage{tabularx}
\usepackage[title]{appendix}
\usepackage{mathtools}
\usepackage{xspace}
\usepackage{todonotes}

\title{Proper Hierarchies in Polylogarithmic Time and Absence of Complete Problems\thanks{The research reported in this paper results from the project {\em Higher-Order Logics and Structures} supported by the Austrian Science Fund (FWF: \textbf{[I2420-N31]}).}}

\author{Flavio Ferrarotti\inst{1} \and Sen\'{e}n Gonz\'{a}lez\inst{2} \and Klaus-Dieter Schewe\inst{3} \and Jos\'{e} Mar\'{\i}a Turull-Torres\inst{4}}

\institute{Software Competence Center Hagenberg, Austria
\and
P\&T Connected, Austria
\and
Zhejiang University, UIUC Institute, China
\and
Universidad Nacional de La Matanza, Buenos Aires, Argentina}

\newcommand{\dpolylog}{DPolylogTime\xspace}
\newcommand{\polylog}{\mathrm{DPolyLogTime}}
\newcommand{\npolylog}{NPolylogTime\xspace}
\newcommand{\mnpolylog}{\mathrm{NPolylogTime}}
\newcommand{\dtime}[1]{\mathrm{DTIME}(#1)}
\newcommand{\ntime}[1]{\mathrm{NTIME}(#1)}
\newcommand{\atime}[2]{\mathrm{ATIME}(#1,#2)}
\newcommand{\atimeop}[2]{\mathrm{ATIME}^{\mathit{op}}(#1,#2)}
\newcommand{\mA}{\mathbf{A}}
\newcommand{\mB}{\mathbf{B}}
\newcommand{\initial}[1]{\mathrm{InitialZeros}^{#1}}
\newcommand{\conseq}[1]{\mathrm{ConseqZeros}^{#1}}
\newcommand{\noconseq}[1]{\mathrm{NoConseqZeros}^{#1}}
\newcommand{\exactlyonce}[1]{\mathrm{ExactlyOnce}^{#1}}
\newcommand{\genconseq}[2]{\mathrm{AtLeastBlocks}^{#1}_{#2}}
\newcommand{\genexactly}[2]{\mathrm{ExactlyBlocks}^{#1}_{#2}}
\newcommand{\soplog}{\mathrm{SO}^{\mathit{plog}}}
\newcommand{\sigmaplog}[1]{\Sigma^\mathit{plog}_{#1}}
\newcommand{\piplog}[1]{\Pi^\mathit{plog}_{#1}}
\newcommand{\polysigma}[1]{\tilde{\Sigma}_{#1}^{\mathit{plog}}}
\newcommand{\polypi}[1]{\tilde{\Pi}_{#1}^{\mathit{plog}}}
\newcommand{\ttx}{{\tt x}}
\newcommand{\tty}{{\tt y}}
\newcommand{\ttz}{{\tt z}}

\begin{document}

\maketitle

\begin{abstract}
The polylogarithmic time hierarchy structures sub-linear time complexity. In recent work it was shown that all classes $\polysigma{m}$ or $\polypi{m}$ ($m \in \mathbb{N}$) in this hierarchy can be captured by semantically restricted fragments of second-order logic. In this paper the descriptive complexity theory of polylogarithmic time is taken further showing that there are strict hierarchies inside each of the classes of the hierarchy. A straightforward consequence of this result is that there are no complete problems for these complexity classes, not even under polynomial time reductions. 
As another consequence we show that the polylogarithmic time hierarchy itself is strict.
\end{abstract}

\section{Introduction}

Computations with sub-linear time complexity have not been studied intensively. However, such computations appear rather naturally, e.g. in the area of circuits. Mix Barrington studied the complexity of circuits \cite{barrington:sct1992} characterizing a class of families of constant-depth quasi-polynomial size AND/OR-circuits. In particular, he proved that the class of Boolean queries computable by the class of $\mathrm{DTIME}[(\log n)O(1)]$ DCL-uniform families of Boolean circuits of unbounded fan-in, size $2^{{(\log n)}^{O(1)}}$ and depth $O(1)$ coincides with the class of Boolean queries expressible in a fragment $SO^b$ of second-order logic. As used in his study, the complexity class $\mathrm{DTIME}[(\log n)O(1)]$ is known as {\em quasipolynomial time}. Furthermore, the fastest known algorithm for checking graph isomorphisms has also a complexity in quasipolynomial time \cite{babai:stoc2016}. 

In \cite{FerrarottiGST18} we started a deeper investigation of sub-linear time computations emphasising  complexity classes DPolyLogTime and NPolyLogTime of decision problems that can be solved deterministically or non-deterministically with a time complexity in $O(\log^k n)$ for some $k$, where $n$ is as usual the size of the input. We extended these complexity classes to a complete hierarchy, the {\em polylogarithmic time hierarchy}, analogous to the polynomial time hierarchy, and for each class $\Sigma_m^{plog}$ or $\Pi_m^{plog}$ ($m \in \mathbb{N}$) in the hierarchy we defined a fragment of semantically restricted second-order logic capturing it \cite{FerrarottiGST19,FerrarottiGTBV19}. While the hierarchy as a whole captures the same class of problems studied by Mix Barrington, the various classes of the hierarchy provide fine-grained insights into the nature of decision problems decidable in sub-linear time.

With these promising results the natural question occurs, whether there are complete problems in the hierarchy, and what would be an appropriate notion of reduction to define complete problems. Note that for the somehow related complexity class PolyLogSpace it is known since long time that it does not have complete problems. Another problem is the strictness of the polylogarithmic hierarchy.

In this paper we address this problem. 
We show that for none of the classes $\polysigma{m}$ and $\polypi{m}$ ($m \in \mathbb{N}$) in the polylogarithmic time hierarchy there exists a complete problem. It turns out that this result is a rather simple consequence of the existence of proper hierarchies inside each of the classes $\polysigma{m}$ and $\polypi{m}$. Note that a similar approach shows the non-existence of complete problems for PolyLogSpace, but the corresponding proof exploits theorems by Hartmanis et al. that cannot be applied to our case, as these theorems (which are well known in complexity theory as the space and time hierarchy theorems) require at least linear time.

The remainder of this paper is organized as follows. Section~\ref{complexityClasses} summarizes the necessary preliminaries for our investigation introducing the complexity classes of the polylogarithmic time hierarchy. This is complemented in Section~\ref{sec:plt-logic} by reviewing $\text{SO}^{plog}$, the polylogarithmically-restricted fragment of second-order logic that is used to define subsets capturing the complexity classes $\polysigma{m}$ and $\polypi{m}$. Section~\ref{sec:problems} introduces concrete decision problems that we use to show the existence of proper hierarchies inside $\polysigma{m}$ and $\polypi{m}$. We use the capturing logics to define these problems that are parametrised by $k \in \mathbb{N}$, and the various different values for $k$ give rise to the hierarchies. Theorems showing that we obtain proper hierarchies inside $\polysigma{m}$ and $\polypi{m}$ are proven in Section \ref{hierarchies}. Then the non-existence of complete problems arises as a rather straightforward consequence, as we will show in Section \ref{sec:complete}. 
We further show in Section~\ref{polyloghierarchyisstrict} that another consequence is the strictness of the polylogarithmic time hierarchy itself.
We conclude with a brief summary in Section \ref{sec:schluss}.

\section{Polylogarithmic time complexity classes}\label{complexityClasses}


The sequential access that Turing machines have to their tapes makes it impossible to compute anything in sub-linear time. Therefore, logarithmic time complexity classes are usually studied using models of computation that have random access to their input. As this also applies to the poly-logarithmic complexity classes studied in this paper, we adopt a Turing machine model that has a \emph{random access} read-only input, similar to the log-time Turing machine in~\cite{barrington:jcss1990}.

In the following, $\log n$ always refers to the binary logarithm of $n$, i.e., $\log_2 n$. With $\log^k n$ we mean $(\log n)^k$.

A \emph{random-access Turing machine} is a multi-tape Turing machine with (1) a read-only (random access) \emph{input} of length $n+1$, (2) a fixed number of read-write \emph{working tapes}, and (3) a read-write input \emph{address-tape} of length $\lceil \log n \rceil$.

Every cell of the input as well as every cell of the address-tape contains either $0$ or $1$ with the only exception of the ($n+1$)st cell of the input, which is assumed to contain the endmark $\triangleleft$. In each step the binary number in the address-tape either defines the cell of the input that is read or if this number exceeds $n$, then the ($n+1$)st cell containing $\triangleleft$ is read.   

\begin{example}\label{ex:machine}
Let polylogCNFSAT be the class of satisfiable propositional formulae in conjunctive normal form with $c \leq \lceil \log n \rceil^k$ clauses, where $n$ is the length of the formula. Note that the formulae in polylogCNFSAT tend to have few clauses and many literals. We define a random-access Turing machine $M$ which decides polylogCNFSAT. The alphabet of $M$ is $\{0,1,\#,+,-\}$. The input formula is encoded in the input tape as a list of $c \leq \lceil \log n \rceil^k$ indices, each index being a binary number of length $\lceil \log n \rceil$, followed by $c$ clauses. For every $1 \leq i \leq c$, the $i$-th index points to the first position in the $i$-th clause. Clauses start with $\#$ and are followed by a list of literals. Positive literals start with a $+$, negative with a $-$. The $+$ or $-$ symbol of a literal is followed by the ID of the variable in binary. $M$ proceeds as follows: (1) Using binary search with the aid of the ``out of range'' response $\triangleleft$, compute $n$ and $\lceil \log n \rceil$. (2) Copy the indices to a working tape, counting the number of indices (clauses) $c$. (3) Non-deterministically guess $c$ input addresses $a_1, \ldots, a_c$, i.e., guess $c$ binary numbers of length $\lceil \log n \rceil$. (4) Using $c$ $1$-bit flags, check that each $a_1, \ldots, a_c$ address falls in the range of a different clause. (5) Check that each $a_1, \ldots, a_c$ address points to an input symbol $+$ or $-$. (6) Copy the literals pointed by $a_1, \ldots, a_c$ to a working tape, checking that there are \emph{no} complementary literals. (7) Accept if all checks hold.
\end{example}

Let $L$ be a language accepted by a random-access Turing machine $M$. Assume that for some function $f$ on the natural numbers, $M$ makes at most $O(f(n))$ steps before accepting an input of length $n$. If $M$ is deterministic, then we write $L \in \dtime{f(n)}$. If $M$ is non-deterministic, then we write $L \in \ntime{f(n)}$. We define the classes of deterministic and non-deterministic poly-logarithmic time computable problems as follows:
\[ \polylog = \bigcup_{k \in \mathbb{N}} \dtime{\log^k n} \quad \, \mnpolylog = \bigcup_{k \in \mathbb{N}} \ntime{\log^k n} \]
The non-deterministic random-access Turing machine in Example~\ref{ex:machine} clearly works in polylog-time. Therefore, polylogCNFSAT $\in \mnpolylog$.

Recall that an alternating Turing machine comes with a set of states $Q$ that is partitioned into subset $Q_\exists$ and $Q_\forall$ of so-called existential and universal states. Then a configuration $c$ is accepting iff
\begin{itemize}

\item $c$ is in a final accepting state,

\item $c$ is in an existential state and there exists a next accepting configuration, or

\item $c$ is in a universal state, there exists a next configuration and all next configurations are accepting.

\end{itemize}

In analogy to our definition above we can define a \emph{random-access alternating Turing machine}. The languages accepted by such a machine $M$, which starts in an existential state and makes at most $O(f(n))$ steps before accepting an input of length $n$ with at most $m$ alternations between existential and universal states, define the complexity class $\atime{f(n)}{m}$. Analogously, we define the complexity class $\atimeop{f(n)}{m}$ comprising languages that are accepted by a random-access alternating Turing machine that starts in a universal state and makes at most $O(f(n))$ steps before accepting an input of length $n$ with at most $m-1$ alternations between universal and existential states. With this we define
\[ \polysigma{m} = \bigcup_{k \in \mathbb{N}} \mathrm{ATIME}[\log^k n,m] \quad \text{and} \quad \polypi{m} = \bigcup_{k \in \mathbb{N}} \mathrm{ATIME}^{op}[\log^k n,m] . \]

The poly-logarithmic time hierarchy is then defined as $\mathrm{PLH} = \bigcup_{m \ge 1} \polysigma{m}$. Note that $\polysigma{1} = \mnpolylog$ holds. 

\begin{remark}

Note that a simulation of a $\mnpolylog$ Turing machine $M$ by a deterministic machine $N$ requires checking all computations in the tree of computations of $M$. As $M$ works in time $({\log n})^{O(1)}$, $N$ requires time $2^{{\log n}^{O(1)}}$. This implies $\mnpolylog \subseteq \mathrm{DTIME}(2^{{\log n}^{O(1)}})$, which is the complexity class called quasipolynomial time of the fastest known algorithm for graph isomorphism \cite{babai:stoc2016}, which further equals the class  
$\mathrm{DTIME}({n^{{\log n}^{O(1)}}})$\footnote{This relationship appears quite natural in view of the well known relationship $\mathrm{NP} = \mathrm{NTIME}(n^{O(1)}) \subseteq \mathrm{DTIME}(2^{{n}^{O(1)}}) = \mathrm{EXPTIME}$.}.
\end{remark}

\section{Logics for polylogarithmic time}\label{sec:plt-logic}

The descriptive complexity of the polylogarithmic time complexity classes described in the previous section, has been recently studied in deepth in~\cite{FerrarottiGST18,FGST18,FerrarottiGST19,FerrarottiGTBV19,FerrarottiGTBV19b}, where precise logical characterization of those classes were presented. The logics used in those characterizations are quite useful to think and describe the problems used in this paper to prove proper hierarchies inside polylogarithmic time. In this section we describe these logics and the results regarding their correspondence with the different polylogarithmic time complexity classes.

The capturing results for polylogarithmic time hold over ordered structures. A finite ordered $\sigma$-structure $\mA$ is a finite structure of vocabulary $\sigma\cup\{<\}$, where $\leq\notin\sigma$ is a binary relation symbol and $<^\mA$ is a linear order on $A$.
Every finite ordered structure has a corresponding isomorphic structure, whose domain is an
initial segment of the natural numbers. Thus, we assume, as usual, that $A = \{0, 1, \ldots, n-1\}$, where $n$ is the cardinality $|A|$ of $A$.  In the case of non-deterministic polylogarithmic time complexity, the capturing results also assume that $\sigma$ includes $\mathrm{SUCC}$, $\mathrm{BIT}$ and constants for $\log n$, the minimum, second and maximum elements. In every structure $\bf A$, the symbol $\mathrm{SUCC}$ is interpreted by the successor relation corresponding to the $<^{\bf A}$ ordering. The constant symbols $0$, $1$ and $\mathit{max}$ are in turn interpreted as the minimum, second and maximum elements under the $<^{\bf A}$ ordering and the constant $\mathit{logn}$ as $\left\lceil \log |A| \right\rceil$. Finally, $\mathrm{BIT}$ is interpreted by the following binary relation:
\[\mathrm{BIT}^{\bf A} = \{(i, j) \in A^2 \mid \text{Bit $j$ in the binary representation of $i$ is $1$}\}.\]
W.l.o.g., we assume that all structures have at least \emph{three} elements. This results in a cleaner presentation, avoiding trivial cases which would unnecessarily complicate some formulae.

Let us start with \dpolylog. This class is captured by the \emph{index logic} introduced in~\cite{FerrarottiGTBV19}. Index logic is two-sorted; variables of the first sort range over
the domain of the input structure.  Variables of the second sort
range over an initial segment of the natural numbers; this
segment is bounded by the logarithm of the size of the input
structure.  Thus, the elements of the second sort represent the
bit positions needed to address elements of the first sort.
Index logic includes full fixpoint logic on the second sort.
Quantification over the first sort, however, is heavily
restricted.  Specifically, a variable of the first sort can only
be bound using an address specified by a subformula that defines
the positions of the bits of the address that are set.
This ``indexing mechanism'' lends index logic its name.

The following result confirms that the problems that can be described in the index logic are in $\dpolylog$ and vice versa.
    \begin{theorem}[\cite{FerrarottiGTBV19}]\label{captureResult}    
    Index logic captures $\dpolylog$ over ordered structures.
    \end{theorem}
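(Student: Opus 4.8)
The statement is a two-sided capturing result, so the plan is to establish the two inclusions separately, following the pattern of the classical Immerman--Vardi theorem $\mathrm{FO}(\mathrm{IFP}) = \mathrm{PTIME}$, but scaled down to the second sort: the role played there by the whole domain is played here by the logarithmically bounded second sort, while the first sort is reached only through the restricted index construct.

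\emph{Soundness (every index logic formula defines a \dpolylog problem).} I would proceed by structural induction on formulae, exhibiting for each $\varphi$ a deterministic random-access Turing machine that evaluates $\varphi$ on $\mathbf{A}$ in time $(\log n)^{O(1)}$, where $n$ is the input size and the second sort has size $\Theta(\log n)$. The key observation is that a tuple of second-sort variables takes only $(\log n)^{O(1)}$ values and a second-sort relation fits on a work tape in $(\log n)^{O(1)}$ space. First-sort variables are never enumerated: by the grammar they are introduced only through the index construct, so the machine stores a first-sort value as its $\lceil\log n\rceil$-bit address on a work tape. Atomic formulae over the first sort are then checked by writing the relevant addresses onto the address tape and reading the corresponding input bits in polylogarithmic time. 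For $t = \mathit{index}\{\mathtt{x}:\varphi(\mathtt{x})\}$ the machine evaluates $\varphi(j)$ for each of the $\lceil\log n\rceil$ bit positions $j$, and for $\exists x(x = \mathit{index}\{\dots\}\wedge\psi)$ it computes the address bit by bit and evaluates $\psi$ on the element it denotes (if any). The only unbounded-looking construct is $\mathrm{IFP}$, but since it operates on the second sort, a relation of arity $r$ over a domain of size $\lceil\log n\rceil$ reaches its fixpoint in at most $(\log n)^r$ stages, each costing polylogarithmic time; hence the whole evaluation stays in \dpolylog.

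\emph{Completeness (every \dpolylog problem is index-logic definable).} Here I would encode the run of a deterministic random-access machine $M$ deciding the problem in time $(\log n)^k$ by a single $\mathrm{IFP}$ formula over the second sort. A time step lies in $\{0,\dots,(\log n)^k-1\}$ and is therefore representable by a $k$-tuple of second-sort elements; the same holds for positions on the working tapes, which have polylogarithmic length, whereas a position on the $\lceil\log n\rceil$-bit address tape is a single second-sort element. Using the order on the second sort together with the fixpoint operator, I would first define the necessary arithmetic (tuple successor, $\mathrm{BIT}$, comparison) on these $k$-tuples. I would then define, by simultaneous induction (reducible to the single-relation $\mathrm{IFP}$ of the grammar in the standard way), relations describing the entire computation: the state, the head positions, and the cell contents of the working and address tapes at each time $t$. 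The transition is local and hence expressible from the configuration at time $t$; the one genuinely non-local step is the input read, where $M$ consults its address tape, and this is exactly what the index construct captures, turning the $\lceil\log n\rceil$ bits currently held on the simulated address tape into a first-sort element and reading the corresponding bit of $\mathbf{A}$. The formula finally asserts that the state reached at the last time step is accepting.

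The main obstacle I expect is the completeness direction, and within it the bookkeeping of the simulation: one must keep the entire fixpoint \emph{on the second sort}, so that it converges in polylogarithmically many stages, while still accessing the first-sort input, the only permitted bridge between the sorts being the restricted index construct. Verifying that each elementary step of $M$ --- in particular the input read that couples the $\lceil\log n\rceil$-bit address to an element of $A$ --- is faithfully and uniformly simulated by an index subformula, and that the $k$-tuple time encoding together with its successor relation correctly drives the induction, is the technically delicate part. The soundness direction, by contrast, is essentially an accounting argument once the crucial fact is isolated that the fixpoint lives on a domain of size $\lceil\log n\rceil$.
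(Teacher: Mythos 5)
This theorem is stated in the paper as an imported result, cited from \cite{FerrarottiGTBV19} without proof, so there is no in-paper argument to compare against; the comparison can only be with the standard proof in the cited source. Your two-direction plan --- structural induction showing evaluation in \dpolylog, exploiting that fixpoints live on the $\Theta(\log n)$-sized second sort and hence close in polylogarithmically many stages, and conversely a fixpoint simulation of an $O(\log^k n)$-time random-access machine with time steps and work-tape positions coded as second-sort tuples and the index construct mediating input reads --- is exactly the capture methodology used there and is correct in outline, the only understated detail being that an address into $\mathrm{bin}(\mathbf{A})$ spans roughly $\max_i r_i \cdot \lceil \log n \rceil$ bits rather than $\lceil \log n \rceil$, so reading an input bit of a relation of arity $r_i$ requires one index application per coordinate of the addressed tuple plus definable second-sort arithmetic (division with remainder to extract coordinates), a point subsumed by the bookkeeping you already flag as the delicate part.
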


Regarding nondeterministic polylogarithmic time, the restricted second-order logic $\soplog$ defined in~\cite{FerrarottiGST18,FGST18,FerrarottiGST19} captures the polylogarithmic-time hierarchy, with its quantifier prenex fragments $\sigmaplog{m}$ and $\piplog{m}$ capturing  the corresponding levels $\polysigma{m}$ and $\polypi{m}$ of this hierarchy, respectively.

$\soplog$ is a fragment of second-order logic where second-order quantification range over relations of polylogarithmic size and first-order quantification is restricted to the existential fragment of first-order logic plus universal quantification over variables under the scope of a second-order variable. 

Formally, we can inductively define the syntax of $\soplog$ as follows:
\begin{itemize}
\item Every formula in the existential fragment of first-order logic with equality is a $\soplog$ formula.

\item If $X$ is a second-order variable of arity $r$, and $t_1, \ldots, t_r$ are first-order terms, then both $X(t_1, \ldots, t_r)$ and $X(t_1, \ldots, t_r)$ are $\soplog$ formulae.

\item If $\varphi$ and $\psi$ are are $\soplog$ formulae, then $(\varphi \wedge \psi)$ and $(\varphi \vee \psi)$ are are $\soplog$ formulae.
    
\item If $\varphi$ is a $\soplog$ formula, $X$ is a second-order variable of arity $r$ and $\bar{x}$ is an $r$-tuple of first-order variables, then $\forall \bar{x} (X(\bar{x}) \rightarrow \varphi)$ is $\soplog$ formula.    
    
\item If $\varphi$ is a $\soplog$ formula and $x$ is a first-order variable, then $\exists x \varphi$ is a $\soplog$ formula.   
     
\item If $\varphi$ is a $\soplog$ formula and $X$ is a second-order variable, then both $\exists X \varphi$ and $\forall X \varphi$ are $\soplog$ formulae.
\end{itemize}    

The most significant restriction of $\soplog$ is in its semantics. In addition to its arity, each second-order variable $X$ is associated with another non-negative integer, its \emph{exponent}, and it is required that any $X$ of arity $r$ and exponent $k$ is interpreted on a structure of domain $A$ as an $r$-ary relation \emph{of cardinality smaller than} $\log^k |A|$. Otherwise, the semantics of $\soplog$ follows the standard semantics of second-order logic. 

As usual, the fragments $\sigmaplog{m}$ (resp. $\piplog{m}$) are defined by considering $\soplog$ formulae with $m$ alternating blocks of second-order quantifiers in quantifier prenex (Skolem) normal form, starting with an existential (resp. universal) block. Note that by Lemma~3 in~\cite{FGST18}, for every $\soplog$ formula $\varphi$ there is an equivalent formula $\varphi'$ that is in quantifier prenex normal form. In the following we will assume that the reader is familiar with the techniques that can be applied to transform arbitrary $\soplog$ formulae into equivalent formulae in Skolem normal form. Those techniques are detailed in the proof of Lemma~3 in Appendix~B in~\cite{FGST18}. 

The following result characterizes precisely the expressive power of $\soplog$ in terms of the nondeterministic polylogarithmic time hierarchy. Note that in particular, existential $\soplog$ captures $\npolylog$. 

\begin{theorem}[\cite{FerrarottiGST18,FerrarottiGST19}]\label{pedsoplog3}
Over ordered structures with successor relation, $\mathrm{BIT}$ and constants for $\log n$, the minimum, second and maximum elements, $\Sigma^{\mathit{plog}}_m$ captures $\tilde{\Sigma}^{\mathit{plog}}_m$ and $\Pi^{\mathit{plog}}_m$ captures $\tilde{\Pi}^{\mathit{plog}}_m$ for all $m \ge 1$.
\end{theorem}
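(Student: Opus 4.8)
The plan is to prove the two inclusions that together constitute the capturing claim for the logic fragment $\sigmaplog{m}$ and the complexity class $\polysigma{m}$; the statement for $\piplog{m}$ and $\polypi{m}$ then follows by the dual argument, exchanging the roles of existential and universal second-order blocks and starting the simulating machine in a universal state. It is convenient to organise the whole argument as an induction on $m$, taking as base case $m=1$ the capturing of $\mathrm{NPolylogTime}$ by existential $\soplog$, and peeling off one leading quantifier block (one alternation) in the inductive step via the decompositions $\sigmaplog{m} = \exists\overline{X}\,\piplog{m-1}$ on the logic side and ``one existential phase stacked on a $\polypi{m-1}$ computation'' on the machine side.

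For the direction that a $\sigmaplog{m}$ formula yields a machine, I would build a random-access alternating machine whose phase structure mirrors the block structure of a prenex formula $\exists\overline{X}_1\forall\overline{X}_2\cdots Q\,\overline{X}_m\,\psi$ with first-order matrix $\psi$: an existential second-order block becomes an existential guessing phase, a universal block a universal branching phase. The key quantitative point is that a second-order variable of arity $r$ and exponent $k$ denotes a relation of fewer than $\log^k n$ tuples, each an $r$-tuple of domain elements encoded by $\lceil\log n\rceil$-bit addresses, so the whole relation is describable by $O(\log^{k+1} n)$ bits and can be guessed or branched over in polylogarithmic time. The matrix $\psi$ is then evaluated within a single further existential phase: existential first-order quantifiers are realised by guessing $\lceil\log n\rceil$-bit addresses, atomic formulas over the input are checked by random access through the address tape (using $\mathrm{BIT}$, the order and the constants to resolve the indexing), atomic formulas over the guessed relations are checked by scanning their polylogarithmic descriptions, and — crucially — every guarded universal $\forall\bar{x}(X(\bar{x})\to\cdots)$ is discharged by a deterministic sequential loop over the fewer than $\log^k n$ tuples of $X$ rather than by universal branching, so it adds no alternation. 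Since $\psi$ contributes at most one extra existential phase, both the alternation count and the polylogarithmic time bound are respected.

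For the converse, that a machine yields a $\sigmaplog{m}$ formula, I would encode an accepting computation of a polylogarithmic-time alternating machine $M$. As $M$ runs for $O(\log^k n)$ steps it visits only polylogarithmically many work-tape cells, so each configuration is described by polylogarithmically many bits; cutting the run at its at most $m$ alternation points, I would record the existential choices of the $i$-th existential phase (respectively the selected branch of a universal phase) by an existentially (respectively universally) quantified second-order variable of polylogarithmic size, thereby reproducing the $\exists\forall\exists\cdots$ prefix of $\sigmaplog{m}$. The first-order matrix must then assert that the encoded run is legal and accepting: that the initial configuration faithfully represents the input — here $\mathrm{BIT}$, the order and the constant for $\log n$ let the matrix turn addresses into input positions and relate them to the input relations, mirroring $M$'s random access — that consecutive configurations obey $M$'s transition function, and that the final configuration is accepting. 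The quantifications ``for all time steps'' and ``for all tape positions'' range over only polylogarithmically many values, so they can be written as guarded universals $\forall\bar{x}(T(\bar{x})\to\cdots)$ over an auxiliary second-order relation $T$ of polylogarithmic size enumerating the valid step/position indices, keeping the matrix inside the existential-plus-guarded-universal fragment and respecting every exponent bound.

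I expect the main obstacle to be this low-level simulation in the converse direction: verifying a single step of a random-access alternating machine — in particular reading the input bit selected by the address tape — using only the severely restricted first-order fragment (existential quantification together with universal quantification guarded by a polylogarithmic-size second-order relation), while guaranteeing that every guessed relation stays below its prescribed $\log^k n$ cardinality. A secondary but genuinely delicate point, present in both directions, is the exact alternation bookkeeping: one must match the $m$ second-order blocks to the machine's phase structure and absorb the extra existential phase required by the first-order matrix without overshooting the alternation budget, paying attention to the parity of $m$ and to the dual conventions defining $\polysigma{m}$ (starting existential) and $\polypi{m}$ (starting universal).
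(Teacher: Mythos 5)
First, a point of order: this theorem is not proved in the paper at all --- it is imported from the cited works \cite{FerrarottiGST18,FerrarottiGST19}, so there is no in-paper proof to compare against. Your proposal does reconstruct the general strategy of those works (formula-to-machine by guessing or branching over polylogarithmic-size relations block by block; machine-to-formula by encoding polylog-length computations into polylog-size relations), but it has a genuine gap at exactly the point you flag as ``delicate'' and then pass over. In the formula-to-machine direction you assert that the matrix $\psi$ ``contributes at most one extra existential phase'' and that therefore the alternation count is respected. This fails precisely when the innermost second-order block is universal, i.e.\ for $\sigmaplog{m}$ with $m$ even and for $\piplog{m}$ with $m$ odd --- including the base case $\piplog{1}$, which the theorem asserts is captured by \emph{purely universal} machines ($\atimeop{\log^k n}{1}$). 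The syntax of $\soplog$ permits \emph{unguarded} first-order existentials $\exists x\,\varphi$ in the matrix, whose witness ranges over the full domain of size $n$: inside a universal phase such a witness can neither be found by a deterministic scan (that costs linear time) nor guessed without opening one more existential block, which gives $m+1$ blocks instead of $m$. The resolution is not bookkeeping: one must distinguish \emph{guarded} existentials $\exists x(X(x)\wedge\varphi)$, which are checkable by a deterministic scan of the polylog-size relation $X$ (this is exactly the device this paper itself uses in the proof of Lemma~\ref{lemma:pi1}), from unguarded ones, and invoke the normal-form machinery of Lemma~3 in \cite{FGST18} to ensure that what remains under a final universal block is deterministically evaluable. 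Your sketch contains no such mechanism, so the induction you propose already breaks at $\piplog{1}$.

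A second, smaller gap lies in the machine-to-formula direction. Your matrix must express statements of the shape ``\emph{if} the universally quantified relation encodes a legal branch of a universal phase, \emph{then} the remainder of the run accepts,'' so the legality conditions occur negated. But $\soplog$ is not closed under negation --- the paper stresses this very point when it constructs $\noconseq{k}$ in Problem~\ref{p2} and cannot simply negate the formula of Problem~\ref{p1} --- so you must exhibit explicit $\soplog$ formulas for the negations of your consistency and legality predicates, while also keeping every auxiliary relation below its cardinality bound. This is where much of the technical length of the cited proofs lies, and it is absent from the sketch; as written, the implication structure of your matrix is not syntactically expressible in the fragment.
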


\section{Problems that lead to proper hierarchies}\label{sec:problems}

Here we introduce the decision problems that we use in the next section to show the existence of proper hierarchies of polylogarithmic-time. In addition, for the nonseterministic classes we give a precise definition of these problems in terms of the logic $\soplog$ studied in~\cite{FerrarottiGST18,FGST18,FerrarottiGST19} and discussed in the previous section.  

From now on we work with the class of structures known as \emph{word models} (see for instance~\cite{EbbinghausF95}).
Let $\pi$ be the vocabulary $\{<, R_0, R_1\}$, where $<$ is a binary relation symbol and $R_0, R_1$ are unary relation symbols. We can identify any binary string (word) $w = a_1 \ldots a_{n}$ in $\{0,1\}^+$ with a $\pi$-structure (word model) ${\mA}_w$, where the cardinality of the domain $A$ of ${\mA_w}$ equals the length of $w$, $<^{{\mA}_w}$ is a linear order in $A$, $R_0^{{\mA}_w}$ contains the positions in $w$ carrying a $0$, and $R_1^{{\mA}_w}$ contains the positions in $w$ carrying a $1$.

\begin{problem}[$\initial{k}$]\label{p0}
The problem $\initial{k}$ consists on deciding (over word models of signature $\pi$) the language of binary strings which have a prefix of at least $\lceil\log^{k} n\rceil$ consecutive zeros, where $n$ is the length of the string.

\end{problem}

\begin{problem}[$\conseq{k}$]\label{p1}
Let $\conseq{k}$ denote the problem of deciding the language of binary strings which have at least $\lceil\log^{k} n\rceil$ consecutive bits set to $0$, where $n$ is the length of the string. This can be expressed formally in $\soplog$ as follows:\\[0.2cm]
$\exists X (|X| = \log^kn \wedge \mathit{SEQ}(X) \wedge \forall x (X(x) \rightarrow R_0(x))$,\\[0.2cm]
where $X$ is of arity $1$ and exponent $k$, the expression $|X| = \log^kn$ denotes the sub-formula which defines that the cardinality of $X$ is $\lceil \log n \rceil^k$, and $\mathit{SEQ}(X)$ denotes the sub-formula expressing that the elements of $X$ are a contiguous subsequence of the order $<$. 

The sub-formula expressing $|X| = \log^kn$ can be written as follows:\\[0.2cm]
$\exists Y \bar{x} (Y(\bar{x}) \wedge \bar{x} = \bar{0} \wedge \forall \bar{y} (Y(\bar{y}) \rightarrow (\mathit{SUCCk}(\bar{y}, \overline{\mathit{logn}}) \vee \exists \bar{z} (Y(\bar{z}) \wedge \mathit{SUCCk}(\bar{y}, \bar{z}))) \wedge $\\
\hspace*{1cm}$|X| = |Y|)$\\[0.2cm]
where $Y$ is of arity $k$ and exponent $k$, $\bar{x}, \bar{y}, \bar{z}$ denote $k$-tuples of first-order variables, $\mathit{SUCCk}(\bar{y}, \bar{z})$ denotes a sub-formula expressing that $\bar{z}$ is the immediate successor of $\bar{y}$ in the lexicographical order of $k$-tuples, and $|X| = |Y|$ expresses that $X$ and $Y$ have equal cardinality. $\mathit{SUCCk}(\bar{y}, \bar{z})$ can be expressed by a quantifier-free $\soplog$ formula (for details refer to $\mathit{SUCC}_k$ in Section~4 in~\cite{FGST18}). In turn, $|X| = |Y|$ can be expressed by an existential $\soplog$ formula using second order variables of arity $k+1$ and exponent $k$ (for details refer to Section~3.1 in~\cite{FGST18}).

Finally, $\mathit{SEQ}(X)$ can be expressed in $\soplog$ as follows:\\[0.2cm]
$\forall x (X(x) \rightarrow \exists y (\mathit{SUCC}(x, y) \vee \forall z (X(z) \rightarrow z < x)))$\\[0.2cm]
The whole formula for $\conseq{k}$ can then be rewritten in Skolem normal form as a formula in $\sigmaplog{1}$ with second order variables of exponent $k$. 
\end{problem}

\begin{problem}[$\noconseq{k}$]\label{p2}
Let $\noconseq{k}$ denote the problem of deciding the language of binary strings which do \emph{not} have greater than or equal $\lceil\log^{k} n\rceil$ consecutive bits set to $0$, where $n$ is the length of the string. Since syntactically the negation of a formula in $\soplog$ is not always a formula in $\soplog$, we cannot just negate the formula for $\conseq{k+1}$ in Problem~\ref{p1} to get the $\soplog$ formula for $\noconseq{k}$. We can nevertheless define $\noconseq{k}$ as follows: \\[0.2cm]
$\forall X (|X| = \log^kn \wedge \mathit{SEQ}(X)  \rightarrow \exists x (X(x) \wedge R_1(x)))$\\[0.2cm]
This is equivalent to:\\[0.2cm] 
$\forall X (\neg(|X| = \log^kn) \vee \neg\mathit{SEQ}(X)  \vee \exists x (X(x) \wedge R_1(x))$.\\[0.2cm]
It follows that the negations of the sub-formulae $|X| = \log^kn$ that we defined in Problem~\ref{p1} is in $\piplog{1}$. Regarding $\neg\mathit{SEQ}(X)$, it can be written in $\soplog$ as\\[0.2cm]
$\exists x y z (X(x) \wedge \neg X(y) \wedge X(z) \wedge x < z \wedge \mathrm{SUCC}(x,y))$.\\[0.2cm] 
We then get that the formula for $\noconseq{k}$ can be rewritten in Skolem normal form as a formula in $\piplog{1}$ with second order variables of exponent $k$.   
\end{problem}

\begin{problem}[$\exactlyonce{k}$]\label{p3}
Let $\exactlyonce{k}$ denote the problem of deciding the language of binary strings which contain the substring $0^{\lceil \log n \rceil^k}$ exactly once, i.e., $s$ is in $\exactlyonce{k}$ iff $0^{\lceil \log n \rceil^k}$ is a substring of $s$ and every other substring of $s$ is not $0^{\lceil \log n \rceil^k}$.
This can be expressed formally in $\soplog$ by combining the formulae for $\conseq{k}$ and $\noconseq{k}$ (see Problems~\ref{p1} and~\ref{p2}, respectively) as follows:\\[0.2cm]
$\exists X (|X| = \log^kn \wedge \mathit{SEQ}(X) \wedge \forall x (X(x) \rightarrow R_0(x)) \wedge$\\[0.2cm]
\hspace*{1cm} $\forall Y (Y = X \vee \neg(|Y| = \log^kn) \vee \neg\mathit{SEQ}(Y) \vee \exists x (X(x) \wedge R_1(x))))$,\\[0.2cm]
Clearly, all second order variables in the formula need maximum exponent $k$ and the formula itself can be rewritten in Skolem normal form as a formula in $\sigmaplog{2}$.
\end{problem}

\begin{problem}[$\genconseq{k}{l}$]\label{p5}
Let $\genconseq{k}{l}$ for $m \geq 0$ denote the problem of deciding the language of binary strings with at least $(\lceil\log n\rceil^k)^l$ non-overlapping substrings of the form $0^{\lceil \log n \rceil^k}$, where $n$ is the length of the string. 

If $l = 0$ then this is equivalent to $\conseq{k}$ and, as discussed in Problem~\ref{p1}, it can be expressed in $\sigmaplog{1}$.  

If $l = 1$, we can express $\genconseq{k}{l}$ in $\soplog$ as follows:\\[0.2cm]
$\exists X \forall x y \exists Z (|X| = \log^kn \wedge \mathit{SEQP}(X) \wedge $\\[0.1cm]
\hspace*{1cm}$(X(x,y) \rightarrow (|Z| = \log^kn \wedge \mathit{SEQ}(Z) \wedge \mathit{min}(Z) = x \wedge \mathit{max}(Z) = y  \wedge$\\[0.1cm]
\hspace*{3cm} $\forall z (Z(z) \rightarrow R_0(z)))))$.\\[0.2cm]
Here $\mathit{SEQP}(X)$ denotes the sub-formula expressing that $X$ is a set of ordered pairs that form a sequence where every consecutive $(a_1, a_2)$ and $(b_1, b_2)$ in the sequence satisfy that $a_2$ is the immediate predecessor of $b_1$ in the order $<$. This is clearly expressible by a $\soplog$ formula free of second-order quantification. The sub-formulae $\mathit{min}(Z) = x$ and  $\mathit{max}(Z) = y$ have the obvious meaning and again can easily be expressed in $\soplog$ without using second-order quantification. The whole sentence can be transformed into an equivalent sentence in $\sigmaplog{3}$.

Finally, for every $l \geq 2$, we can express $\genconseq{k}{l}$ in $\soplog$ with formulae of the form:\\[0.2cm]
$\exists X_1 \forall x_1 y_1 \exists X_2 \forall x_2 y_2 \cdots \exists X_l \forall x_l y_l \exists Z (|X_1| = \log^kn \wedge \mathit{SEQP}(X_1) \wedge$\\[0.1cm]
$(X_1(x_1,y_1) \rightarrow$\\[0.1cm]
\hspace*{0.2cm} $(|X_2| = \log^kn \wedge \mathit{SEQP}(X_2) \wedge \mathit{minp}(X_2) = x_1 \wedge \mathit{maxp}(X_2) = y_1  \wedge$\\[0.1cm]
\hspace*{0.4cm} $\cdots \wedge (X_{l-1}(x_{l-1},y_{l-1}) \rightarrow$\\[0.1cm]
\hspace*{1.5cm} $(|X_l| = \log^kn \wedge \mathit{SEQP}(X_l) \wedge \mathit{minp}(X_l) = x_{l-1} \wedge \mathit{maxp}(X_l) = y_{l-1}  \wedge$\\[0.1cm]
\hspace*{2cm} $(X_l(x_l,y_l) \rightarrow$\\[0.1cm] 
\hspace*{2.5cm}$(|Z| = \log^kn \wedge \mathit{SEQ}(Z) \wedge \mathit{min}(Z) = x_l \wedge \mathit{max}(Z) = y_1  \wedge $\\[0.1cm]
\hspace*{2.8cm} $\forall z (Z(z) \rightarrow R_0(z))))))\cdots)))$.\\[0.2cm]
The sub-formulae of the form $\mathit{minp}(X) = x$ (resp. $\mathit{maxp}(X) = x$) express that $x$ is the smallest first element (resp. biggest second element) of any tuple in $X$ and is easily expressible in $\soplog$ by a formula free of second-order quantifiers. We can rewrite the whole formula as a $\sigmaplog{2 \cdot l + 1}$ formula.  
\end{problem}

\begin{problem}[$\genexactly{k}{l}$]\label{p6}
Let $\genexactly{k}{l}$ for $m \geq 0$ denote the problem of deciding the language of binary strings with exactly $(\lceil\log n\rceil^k)^l$ non-overlapping substrings of the form $0^{\lceil \log n \rceil^k}$, where $n$ is the length of the string. 

If $l = 0$ then this is equivalent to $\exactlyonce{k}$ and, as discussed in Problem~\ref{p3}, it can be expressed in $\sigmaplog{2}$.  

If $l = 1$, we can express $\genexactly{k}{l}$ in $\soplog$ as follows:\\[0.2cm]
$\exists X \forall x y \exists Z (|X| = \log^kn \wedge \mathit{SEQP}(X) \wedge $\\[0.1cm]
\hspace*{1cm}$(X(x,y) \rightarrow (|Z| = \log^kn \wedge \mathit{SEQ}(Z) \wedge \mathit{min}(Z) = x \wedge \mathit{max}(Z) = y  \wedge$\\[0.1cm]
\hspace*{3cm} $\forall z (Z(z) \rightarrow R_0(z)) \wedge$\\[0.1cm]
\hspace*{3cm} $\forall X' \exists x'y' \forall Z'(X' = X \vee \neg(|X'| = \log^kn) \vee \neg \mathit{SEQP}(X') \vee$\\[0.1cm]
\hspace*{3.5cm} $(X'(x',y') \wedge (Z' = Z \vee \neg(|Z'| = \log^kn) \vee \neg\mathit{SEQ}(Z') \vee$\\[0.1cm]
\hspace*{3.5cm} $\neg(\mathit{min}(Z') = x') \vee \neg(\mathit{max}(Z') = y') \vee $\\[0.1cm]
\hspace*{3.5cm} $\exists z' (Z'(z') \wedge R_1(z')))))))$.\\[0.2cm]
It is not difficult to see that this formula can be rewritten as a $\sigmaplog{4}$ formula. 

Finally, for every $l \geq 2$, we can express $\genexactly{k}{l}$ in $\soplog$ with formulae of the form:\\[0.2cm] 
$\exists X_1 \forall x_1 y_1 \exists X_2 \forall x_2 y_2 \cdots \exists X_l \forall x_l y_l \exists Z (|X_1| = \log^kn \wedge \mathit{SEQP}(X_1) \wedge$\\[0.1cm]
$(X_1(x_1,y_1) \rightarrow$\\[0.1cm]
\hspace*{0.2cm} $(|X_2| = \log^kn \wedge \mathit{SEQP}(X_2) \wedge \mathit{minp}(X_2) = x_1 \wedge \mathit{maxp}(X_2) = y_1  \wedge$\\[0.1cm]
\hspace*{0.4cm} $\cdots \wedge (X_{l-1}(x_{l-1},y_{l-1}) \rightarrow$\\[0.1cm]
\hspace*{1.5cm} $(|X_l| = \log^kn \wedge \mathit{SEQP}(X_l) \wedge \mathit{minp}(X_l) = x_{l-1} \wedge \mathit{maxp}(X_l) = y_{l-1}  \wedge$\\[0.1cm]
\hspace*{2cm} $(X_l(x_l,y_l) \rightarrow$\\[0.1cm] 
\hspace*{2.5cm}$(|Z| = \log^kn \wedge \mathit{SEQ}(Z) \wedge \mathit{min}(Z) = x_l \wedge \mathit{max}(Z) = y_l  \wedge $\\[0.1cm]
\hspace*{2.8cm} $\forall z (Z(z) \rightarrow R_0(z)) \wedge$\\[0.1cm]
\hspace*{2.8cm} $\forall X_1' \exists x_1' y_1' \forall X_2' \exists x_2' y_2' \cdots \forall X_l' \exists x_l' y_l' \forall Z'(X_1' = X_1 \vee$\\[0.1cm]
\hspace*{3cm} $\neg(|X_1'| = \log^kn) \vee \neg\mathit{SEQP}(X_1') \vee (X_1'(x_1',y_1') \wedge (X_2' = X_2 \vee $\\[0.1cm]
\hspace*{3cm} $\neg(|X_2'| = \log^kn) \vee \neg\mathit{SEQP}(X_2') \vee \neg\mathit{minp}(X_2') = x_1' \vee$\\[0.1cm]
\hspace*{3cm} $\neg\mathit{maxp}(X_2') = y_1'  \vee ( \cdots \vee (X_{l-1}'(x_{l-1}',y_{l-1}') \wedge (X_{l}' = X_{l} \vee$\\[0.1cm]
\hspace*{3cm} $\neg(|X_l'| = \log^kn) \vee \neg\mathit{SEQP}(X_l') \vee \neg(\mathit{minp}(X_l') = x_{l-1}') \vee$\\[0.1cm] \hspace*{3cm} $\neg(\mathit{maxp}(X_l') = y_{l-1}')  \vee (X_l'(x_l',y_l') \wedge (Z' = Z \vee $\\[0.1cm]
\hspace*{3cm} $\neg(|Z'| = \log^kn) \vee \neg\mathit{SEQ}(Z') \vee \neg(\mathit{min}(Z') = x_l') \vee$\\[0.1cm]
\hspace*{3cm} $\neg(\mathit{max}(Z') = y_l') \vee \exists z' (Z'(z') \wedge R_1(z')))))) \cdots ))))))))\cdots)))$.\\[0.2cm]
We can rewrite formulae of this form as $\sigmaplog{2 \cdot l + 2}$ formulae.  
\end{problem}

\section{Proper hierarchies in polylogarithmic time}\label{hierarchies}

We now present the key results of the paper showing that all the polylogarithmic complexity classes defined in Section~\ref{complexityClasses}, including every level of the polylogarithmic time hierarchy, contain proper hierarchies defined in terms of the smallest degree of the polynomial required for the decision problems introduced in the previous section.

In order to relate the problems described in the previous section using logics to the polylogarithmic complexity classes defined in terms of random-access Turing machines, we adhere to the usual conventions concerning binary encoding of finite structures~\cite{Immerman99}. That is, if $\sigma = \{R^{r_1}_1, \ldots, R^{r_p}_p, c_1, \ldots, c_q\}$ is a vocabulary, and ${\bf A}$ with $A = \{0, 1, \ldots, n-1\}$ is an ordered structure of vocabulary $\sigma$. Each relation $R_i^{\bf A} \subseteq A^{r_i}$ of $\bf A$ is encoded as a binary string $\mathrm{bin}(R^{\bf A}_i)$ of length $n^{r_i}$ where $1$ in a given position indicates that the corresponding tuple is in $R_i^{\textbf{A}}$.
Likewise, each constant number $c^{\bf A}_j$ is encoded as a binary string $\mathrm{bin}(c^{\bf A}_j)$ of length $\lceil \log n \rceil$. The encoding of the whole structure $\mathrm{bin}(\textbf{A})$ is simply the concatenation of the binary strings encodings its relations and constants. The length $\hat{n} = |\mathrm{bin}(\textbf{A})|$ of this string is $n^{r_1}+\cdots+n^{r_p} + q \lceil \log n \rceil$, where $n = |A|$ denotes the size of the input structure ${\bf A}$. Note that $\log \hat{n} \in O(\lceil \log n \rceil)$, so $\mathrm{NTIME}[\log^k \hat{n}] = \mathrm{NTIME}[\log^k n]$ (analogously for $\mathrm{DTIME}$). Therefore, we can consider random-access Turing machines, where the input is the encoding $\mathrm{bin}(\textbf{A})$ of the structure \textbf{A} followed by the endmark $\triangleleft$.

The following simple lemmas are useful to prove our hierarchy theorems. They show that the problems in the previous section can be expressed by random-access machines working in the required levels of the hierarchy theorems. 

\begin{lemma}\label{lemma:dpolylog}
$\initial{k}$ (see Problem~\ref{p0}) can be decided in $\dtime{\log^{k}n}$. 
\end{lemma}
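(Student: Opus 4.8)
The plan is to exhibit a deterministic random-access Turing machine $M$ that decides $\initial{k}$ within $O(\log^k n)$ steps, where $n=|A|$ is the size of the underlying word model $\mA_w$. By the word-model convention the order $<^{\mA_w}$ is the standard order on $A=\{0,1,\dots,n-1\}$, so having a prefix of at least $\lceil\log^k n\rceil$ consecutive zeros means exactly that positions $0,1,\dots,\lceil\log^k n\rceil-1$ all carry a $0$, i.e. that $R_0^{\mA_w}(i)$ holds for each such $i$. In the encoding $\mathrm{bin}(\mA_w)$ fed to $M$ each such fact is a single bit inside the block encoding $R_0$, and these bits sit at \emph{consecutive} input addresses. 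Hence the task reduces to reading $\lceil\log^k n\rceil$ consecutive input cells and checking that each of them indicates a $0$.

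Concretely, $M$ proceeds in two phases. In a preprocessing phase it uses the standard binary search with the ``out of range'' answer $\triangleleft$ to locate the end of the input, obtaining $\hat n=|\mathrm{bin}(\mA_w)|$ and its bit length; from the latter it derives $\lceil\log n\rceil$ and the threshold $\ell=\lceil\log^k n\rceil$ by elementary arithmetic, and it computes the address $b$ of the first cell of the $R_0$-block. In the main phase it reads the cells $b,b+1,\dots,b+\ell-1$ one after another, rejecting as soon as one of them fails to indicate a $0$ and accepting otherwise. Correctness is immediate from the previous paragraph: $M$ accepts iff the first $\ell$ positions of $w$ are all zero, which is precisely the defining condition of $\initial{k}$.

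The whole difficulty lies in the timing, and this is the step I expect to be the main obstacle. Binary search costs only $O(\log n)$. The danger is the main loop: a naive implementation that rewrites the $\lceil\log\hat n\rceil=O(\log n)$-bit address from scratch before each of the $\ell=O(\log^k n)$ reads would spend $O(\log^{k+1}n)$ steps, one factor of $\log n$ too many. The key observation is that the needed addresses form the \emph{contiguous} block $b,b+1,\dots,b+\ell-1$, so $M$ should keep the address on its address-tape and merely \emph{increment} it before each read; incrementing a binary counter $\ell$ times costs $O(\ell)$ in total by the usual amortized carry-propagation argument. For the same reason the loop must be governed by a counter whose maintenance is likewise $O(1)$ amortized per step (e.g.\ a length-$\ell$ unary counter, or a binary count-down with borrow-propagation used to detect reaching $0$), so that the termination test does not reintroduce a $\log n$ factor. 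With both the addressing and the counting done incrementally the main phase runs in $O(\ell)=O(\log^k n)$ steps. A secondary point to check is that recovering the offset $b$ from $\hat n$ stays within budget: since $\hat n=n^2+2n$ one needs $n$ itself, obtained by an integer square root of $\hat n+1$, and one must argue that this computation fits inside the allotted $O(\log^k n)$ time (or circumvent it by addressing the prefix of the $R_0$-block directly). Combining the $O(\log n)$ preprocessing with the $O(\log^k n)$ main phase yields $\initial{k}\in\dtime{\log^k n}$.
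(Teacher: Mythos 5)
Your proof is correct and takes essentially the same route as the paper's: compute in $O(\log n)$ time the address of the first bit of the block encoding $R_0^{\mA}$, then scan that bit and the following $\lceil\log n\rceil^{k}-1$ consecutive input cells, accepting iff all of them indicate membership in $R_0$. The paper's proof simply asserts that this scan ``clearly'' runs in $O(\log^{k} n)$; your amortized analysis of the address-tape increments and your remark about recovering the offset $b$ from $\hat{n}=n^2+2n$ just make explicit the implementation details it leaves implicit.
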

 
\begin{proof}
Assume the input tape encodes a word model $\mA$ of signature $\pi$, i.e., a binary string. A deterministic random-access Turing machine can in deterministic time $O(\log n)$ calculate and write in its index-tape the address of the first bit in the encoding of $R_0^{\mA}$. Then it only needs to check whether this bit and the subsequent $\lceil \log n \rceil^k -1$ bits in the input-tape are $1$. If that is the case, then the machine accepts the input. Clearly, this process takes time $O(\log^k n)$.\qed
\end{proof}

\begin{lemma}\label{lemma:npolylog}
$\conseq{k}$ (see Problem~\ref{p1}) can be decided in $\ntime{\log^{k}n}$. 
\end{lemma}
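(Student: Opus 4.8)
The plan is to build a non-deterministic random-access Turing machine $M$ that \emph{guesses} where a block of $\lceil\log n\rceil^k$ consecutive zeros begins and then \emph{deterministically verifies} it, closely mirroring the deterministic construction of Lemma~\ref{lemma:dpolylog} but with the fixed prefix position replaced by a guessed position. Recall that the input to $M$ is the encoding $\mathrm{bin}(\mA)$ of a word model, in which the block encoding the unary relation $R_0^{\mA}$ occupies a contiguous segment of $n$ bits, the bit for domain element $j$ sitting at a fixed offset $o+j$ (with $o$ determined by the arities of the relations preceding $R_0$ in the encoding, here $o = n^2$). A domain position carries a $0$ exactly when its $R_0$-bit is $1$.

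First I would have $M$ compute $n$ and $\lceil\log n\rceil$ by binary search against the endmark $\triangleleft$, in deterministic time $O(\log n)$, and from these the offset $o$ and the threshold $\lceil\log n\rceil^k$ by arithmetic on $O(\log n)$-bit numbers. Next, $M$ non-deterministically guesses the starting position $p$ of the zero block, i.e.\ it writes a binary address of length $\lceil\log n\rceil$; this costs $O(\log n)$ steps. It then checks that the block fits, $p + \lceil\log n\rceil^k - 1 \le n-1$, rejecting the branch otherwise. Finally $M$ sets its address tape to $o+p$ and, for $i = 0, 1, \ldots, \lceil\log n\rceil^k - 1$, reads the addressed input bit (continuing only if it is $1$, i.e.\ position $p+i$ lies in $R_0$) and then increments the address tape by one to advance to $p+i+1$. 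If every bit read is $1$, the branch accepts. Some branch accepts iff the string has $\lceil\log n\rceil^k$ consecutive zeros, so $M$ decides $\conseq{k}$.

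The step that needs care is the running-time bound on the verification loop. A naive estimate charges $O(\log n)$ per iteration for manipulating the $O(\log n)$-bit address, giving $O(\log^{k+1}n)$ and missing the target. The key observation is that the loop visits the \emph{consecutive} addresses $o+p, o+p+1, \ldots$, so it suffices to increment a binary counter $\lceil\log n\rceil^k$ times starting from $o+p$; by the standard amortized analysis the total number of bit flips when incrementing a counter $m$ times is $O(m)$, so the whole scan (reads plus increments) runs in $O(\log^k n)$ steps. Adding the $O(\log n)$ setup and guessing phases, and using $k \ge 1$ so that $\log n$ is absorbed into $\log^k n$, the total running time is $O(\log^k n)$, giving $\conseq{k} \in \ntime{\log^k n}$.

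I would also note why the logical characterisation does not already suffice: Problem~\ref{p1} places $\conseq{k}$ in $\sigmaplog{1}$, and Theorem~\ref{pedsoplog3} then only yields membership in $\polysigma{1} = \mnpolylog$, i.e.\ in $\ntime{\log^{k'}n}$ for some unspecified $k'$. Since the hierarchy theorems to follow hinge on the precise exponent $k$, the explicit machine above, which tracks the degree exactly, is what the lemma requires.
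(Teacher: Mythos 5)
Your proof is correct and follows essentially the same guess-and-verify strategy as the paper's own proof: nondeterministically guess the starting position of the zero block within the cells encoding $R_0^{\mA}$ (time $O(\log n)$), then deterministically scan the following $\lceil\log n\rceil^k$ cells. Your write-up is in fact tighter than the paper's, which merely asserts the deterministic scan runs in the required time (writing $O(\log^{k+1} n)$ there, apparently a typo for $\log^{k} n$) and glosses over the cost of updating the $O(\log n)$-bit address at each step; your amortized analysis of the consecutive address increments is precisely the observation needed to charge $O(1)$ rather than $O(\log n)$ per cell and so land on the $O(\log^{k} n)$ bound claimed in the lemma.
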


\begin{proof}
Assume the input tape encodes a word model $\mA$ of signature $\pi$. A random-access Turing machine $M$ can non-deterministically guess a position $i$ in the input tape which falls within the cells encoding $R^{\mA}_0$. This takes time $O(\log n)$. Then $M$ can check (working deterministically) in time $O(log^{k+1} n)$ whether each cell of the input tape between positions $i$ and $i+log^{k+1} n$ has a $0$.\qed
\end{proof}

\begin{lemma}\label{lemma:pi1}
$\noconseq{k}$ (see Problem~\ref{p2}) can be decided in \\ $\atimeop{\log^k n}{1}$.
\end{lemma}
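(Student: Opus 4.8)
The plan is to exploit the fact that $\noconseq{k}$ is precisely the complement of $\conseq{k}$, and to obtain the required machine by \emph{dualizing} the nondeterministic algorithm of Lemma~\ref{lemma:npolylog}. Whereas that machine \emph{existentially} guesses one starting position of a candidate all-zero block and verifies it, here I would build a machine that \emph{universally} ranges over all candidate starting positions and checks that none of them begins a full block of $\lceil \log^k n \rceil$ consecutive zeros. Since an $\atimeop{\cdot}{1}$ computation starts in a universal state and is allowed $m-1 = 0$ alternations, such a purely universal machine is exactly what the class $\atimeop{\log^k n}{1}$ permits.

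Concretely, the machine would first, in a deterministic preamble, recover $n$ and $\lceil \log n\rceil$ by binary search against the endmark $\triangleleft$ in time $O(\log n)$, and locate the region of the input that encodes $R_0^{\mA}$. It then \emph{universally} branches over all addresses $i$ of length $\lceil\log n\rceil$, each naming a potential first position of an all-zero block. On the branch for $i$ it scans, \emph{deterministically and within that single branch}, the window of $\lceil\log^k n\rceil$ consecutive cells starting at $i$: the branch \emph{accepts} as soon as it either runs past the end of the string (so no block of the required length can start at $i$) or reads a position not in $R_0^{\mA}$ (a $1$, which breaks the block), and it \emph{rejects} only if the whole window consists of $0$-positions. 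By the universal acceptance condition the machine accepts iff every branch accepts, i.e.\ iff no length-$\lceil\log^k n\rceil$ window is all zeros, which is exactly the condition that the string has no $\lceil\log^k n\rceil$ consecutive zeros; hence the machine decides $\noconseq{k}$.

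For the resource bounds I would argue two points. The alternation count is immediate: the machine is in a universal state at the start and never leaves the universal states, so it makes $0 = m-1$ alternations, as required for $m = 1$. The time bound is the delicate point. Detecting a $1$ inside the window must be done \emph{deterministically}, by scanning, rather than by an existential guess, otherwise a second (existential) phase would introduce an alternation and push us out of $\atimeop{\cdot}{1}$. Moreover, the obvious implementation that recomputes each of the $\lceil\log^k n\rceil$ addresses from scratch would cost $\Theta(\log^{k+1} n)$; to stay within $O(\log^k n)$ I would traverse the window by treating the address tape as a binary counter and \emph{incrementing} it cell-by-cell. The total number of bit-flips while advancing the counter across a window of length $m = \lceil\log^k n\rceil$ is $\sum_{j}(m/2^{j}+1) = O(m + \log n) = O(\log^k n)$ (for $k \ge 1$), i.e.\ amortized $O(1)$ per cell, so the entire branch, together with the $O(\log n)$ preamble, runs in $O(\log^k n)$ steps.

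The main obstacle is thus not the logic of the construction, which is a routine dualization, but the bookkeeping needed to certify the precise exponent: keeping the inner ``there is a $1$'' test inside one universal branch so that no alternation is spent, handling the boundary windows that overrun the end of the encoding, and replacing naive address recomputation by amortized counter increments so that the running time is $O(\log^k n)$ and not merely $O(\log^{k+1} n)$.
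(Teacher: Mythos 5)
Your proposal is correct and takes essentially the same approach as the paper's proof: a purely universal machine that branches over all candidate starting positions $i$ and, within each branch, deterministically scans the window of $\lceil \log n\rceil^k$ cells to confirm it contains a $1$ (accepting vacuously when the window overruns the end), giving $0$ alternations and time $O(\log^k n)$. Your amortized binary-counter argument for traversing the window is a worthwhile detail: the paper simply asserts that each deterministic check runs in $O(\log^k n)$ without addressing the address-recomputation issue you identify.
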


\begin{proof}
Assume the input tape encodes a word model $\mA$ of signature $\pi$. In a universal state, a random-access alternating Turing machine $M$ can check whether for all cell in some position $i$ in the input tape which falls in a position encoding $R^{\mA}_0$ and is at distance at least $\lceil \log n \rceil^k$ from the end of the encoding, there is a position between positions $i$ and $i+log^{k} n$ with $0$. Each of these checking can be done deterministically in time $O(log^{k} n)$. Therefore this machine decides $\noconseq{k}$ in $\atimeop{\log^{k} n}{1}$.\qed   
\end{proof}

\begin{lemma}\label{lemma:sigma2}
$\exactlyonce{k}$ (see Problem~\ref{p3}) can be decided in $\atime{\log^k n}{2}$.
\end{lemma}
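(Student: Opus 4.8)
The plan is to combine the constructions underlying Lemmas~\ref{lemma:npolylog} and~\ref{lemma:pi1}: use the initial existential phase of an alternating machine to witness one occurrence of the block $0^{\lceil\log n\rceil^k}$, and then a single alternation into a universal phase to verify that no other occurrence exists. This mirrors the $\sigmaplog{2}$ formula of Problem~\ref{p3}, whose outer $\exists X$ picks the unique zero-block and whose $\forall Y$ rules out any distinct one.

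Concretely, assume the input encodes a word model $\mA$ of signature $\pi$. First, as in Example~\ref{ex:machine}, the machine computes $n$ and $\lceil\log n\rceil$ by binary search against the endmark $\triangleleft$, and from these the offset at which the encoding of $R_0^{\mA}$ begins; all of this is deterministic and costs $O(\log n)$. In the existential phase the machine guesses a starting position $i$ (a binary address of length $O(\log n)$) that lies in the range of $R_0^{\mA}$ and at distance at least $\lceil\log n\rceil^k$ from the end, and then deterministically verifies that the $\lceil\log n\rceil^k$ consecutive cells encoding membership of $i, i+1, \ldots, i+\lceil\log n\rceil^k-1$ in $R_0^{\mA}$ are all set; this verification is exactly the check used in Lemma~\ref{lemma:npolylog} and runs in time $O(\log^k n)$. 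The guessed value $i$ is retained on a working tape.

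The machine then performs a single alternation into a universal state and, as in Lemma~\ref{lemma:pi1}, branches over all candidate starting positions $j$ in the range of $R_0^{\mA}$ at distance at least $\lceil\log n\rceil^k$ from the end. On each branch it checks deterministically that either $j = i$ (a comparison of two $O(\log n)$-bit numbers) or at least one of the cells encoding membership of $j, j+1, \ldots, j+\lceil\log n\rceil^k-1$ in $R_0^{\mA}$ is unset, i.e. the block starting at $j$ contains a $1$. Each branch costs $O(\log^k n)$. The whole computation starts existentially, uses one alternation to a universal phase, and runs in time $O(\log^k n)$, so it witnesses membership in $\atime{\log^k n}{2}$.

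The one point requiring care is the interaction between the two phases: the witness position $i$ must be carried across the alternation so that the universal phase excludes exactly the branch $j = i$; since $i$ needs only $O(\log n)$ bits this is unproblematic. It is also worth noting that \emph{exactly once} is meant in the sense of a unique starting position, so that a run of $\lceil\log n\rceil^k + 1$ zeros is correctly rejected: the universal branch $j = i+1$ then finds a second all-zero block and fails, as desired. The remaining bookkeeping (locating the $R_0^{\mA}$ segment and keeping addresses away from the endmark) is identical to the preceding lemmas and poses no additional difficulty.
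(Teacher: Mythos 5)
Your proposal is correct and follows essentially the same route as the paper's own proof: an existential phase guessing the position $i$ of one all-zero block (the check from Lemma~\ref{lemma:npolylog}), followed by a single alternation into a universal phase that, for every candidate position $j \neq i$, verifies a $1$ within the block starting at $j$ (the check from Lemma~\ref{lemma:pi1}), all within $O(\log^k n)$ time. Your additional remarks---that $i$ is carried across the alternation in $O(\log n)$ bits and that overlapping occurrences such as a run of $\lceil\log n\rceil^k + 1$ zeros are correctly rejected via the branch $j = i+1$---are accurate refinements the paper leaves implicit.
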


\begin{proof}
We only need to combine the machines that decide $\conseq{k}$ and $\noconseq{k}$ in Lemmas~\ref{lemma:npolylog} and~\ref{lemma:pi1}, respectively. Thus, an alternating random-access Turing machine machine $M$ can decide $\exactlyonce{k}$ as follows: Assume the input tape encodes a word model $\mA$ of signature $\pi$. Let $s$ and $t$ be the cells that mark the beginning and end of the encoding of $R^{\mA}$. These cells can be calculated by $M$ in $DTIME(\log n)$. First $M$ checks in an existential state whether there is a position $i$ in the input tape which fall between $s$ and $t-log^{k}$ such that each cell between positions $i$ and $i+log^{k} n$ has a $1$. Then $M$ switches to a universal state and checks whether for all cell in some position $j$ between $s$ and $t - \log^{k} n$ of the input tape other than position $i$, there is a cell between positions $j$ and $j+log^{k} n$ with $0$. If these two checks are successful, then the input string belongs $\exactlyonce{k}$. We already saw in Lemmas~\ref{lemma:npolylog} and~\ref{lemma:pi1} that both checks can be done in time $O(\log^{k} n)$.\qed
\end{proof}

In order to get tighter upper bounds, in the previous lemmas we explicitly defined the random-access Turing machines that decide the problems. For the following two lemmas we use the upper bounds resulting from the proof of Theorem~\ref{pedsoplog3} instead, since there seems to be no better upper bounds for these cases. Thus, Lemma~\ref{lemma:genconseq} and~\ref{lemma:genexactly} follow from the facts that: (a) to evaluate the $\soplog$ formulae in Problems~\ref{p5} and~\ref{p6} for $\genconseq{k}{l}$ and $\genexactly{k}{l}$, respectively, the machine needs (as shown in the proof of Theorem~\ref{pedsoplog3} in~\cite{FGST18}) to ``guess'' $\lceil \log n \rceil^k$  addresses, each of length $\lceil \log n \rceil$; and (b) the formula for $\genconseq{k}{l}$ and $\genexactly{k}{l}$ are in $\sigmaplog{2 \times l + 1}$ and $\sigmaplog{2 \times l + 2}$, respectively. 

\begin{lemma}\label{lemma:genconseq}
For $l \geq 0$, $\genconseq{k}{l}$ (see Problem~\ref{p5}) can be decided in\\ $\atime{\log^{k+1} n}{2 \cdot l +1}$.
\end{lemma}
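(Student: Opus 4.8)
The plan is to avoid building an explicit alternating machine with a hand-tuned running time (which seems awkward to do while keeping the exponent tight) and instead to read the bound off the simulation that underlies the capturing Theorem~\ref{pedsoplog3}. First I would collect the two ingredients already in hand. By Problem~\ref{p5}, for $l \geq 1$ the problem $\genconseq{k}{l}$ is defined by an $\soplog$ sentence lying in $\sigmaplog{2l+1}$, and every second-order variable occurring in that sentence has exponent at most $k$. The case $l = 0$ is the base case: there the sentence is the $\conseq{k}$ sentence, which belongs to $\sigmaplog{1}$, and by Lemma~\ref{lemma:npolylog} we already have $\conseq{k} \in \ntime{\log^k n} \subseteq \atime{\log^{k+1} n}{1}$, matching $2 \cdot 0 + 1 = 1$.

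For $l \geq 1$ I would appeal not to the bare membership $\genconseq{k}{l} \in \polysigma{2l+1}$ (which only yields some unspecified polylog exponent), but to the alternating-machine construction given in the proof of Theorem~\ref{pedsoplog3} and detailed in~\cite{FGST18}. That construction processes the $2l+1$ second-order quantifier blocks of the sentence in $2l+1$ alternation phases, an existential block being handled by guessing its relations and a universal block by universally branching over all admissible relations. The cost of a single phase is dominated by materialising one second-order relation $X$: since $X$ has exponent at most $k$, its cardinality is below $\log^k n$, so it is described by at most $\lceil \log n \rceil^k$ tuples, each tuple consisting of a constant number (the arity) of domain addresses of length $\lceil \log n \rceil$. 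Hence one relation is guessed, or branched upon, in time $O(\log^{k+1} n)$, which is exactly fact~(a) of the surrounding discussion.

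It remains to bound the evaluation of the first-order matrix and to sum up. Because the matrix is in the $\soplog$-restricted form, its universal first-order quantifiers are guarded by second-order atoms and therefore range only over the at most $\log^k n$ tuples of relations already fixed in earlier phases, while each existential first-order quantifier merely guesses one domain address of length $\lceil \log n \rceil$ and is absorbed into an existential phase; every atomic test ($R_0$, $R_1$, $<$, $\mathit{SUCC}$, or a membership check) is resolved by random access in time $O(\log n)$. Thus the matrix is also evaluated in $O(\log^{k+1} n)$ time, and since there are only the constantly many $2l+1$ phases, the whole computation runs in $O(\log^{k+1} n)$ steps using exactly $2l+1$ alternations, giving $\genconseq{k}{l} \in \atime{\log^{k+1} n}{2l+1}$.

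The alternation count itself is read off directly from fact~(b) and is not the difficulty; the main obstacle is pinning the time exponent to $k+1$. The generic capturing theorem is insensitive to exponents, so I must reopen its simulation and verify that the only super-$\log n$ contribution comes from materialising the exponent-$k$ relations, each costing $\lceil \log n \rceil^k$ addresses of length $\lceil \log n \rceil$, i.e. the product $\log^{k+1} n$. The one point demanding care is that the interleaved first-order universal quantifiers $\forall x_i y_i$, which genuinely cannot be Skolemised away without breaking the polylogarithmic cardinality restriction, inflate neither the time budget nor the alternation budget beyond $2l+1$; this is precisely what the Skolem-normal-form transformation of~\cite{FGST18} secures when it certifies the sentence as a $\sigmaplog{2l+1}$ sentence.
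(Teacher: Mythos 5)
Your proposal is correct and follows essentially the same route as the paper: the paper gives no explicit machine construction either, but derives the lemma from exactly your two facts --- (a) the simulation underlying Theorem~\ref{pedsoplog3} only needs to guess (or universally branch over) exponent-$k$ relations, i.e.\ $\lceil \log n \rceil^k$ addresses of length $\lceil \log n \rceil$ each, costing $O(\log^{k+1} n)$ per phase, and (b) the formula for $\genconseq{k}{l}$ from Problem~\ref{p5} lies in $\sigmaplog{2 \cdot l + 1}$, which fixes the alternation count. Your added details (the base case $l = 0$ via Lemma~\ref{lemma:npolylog}, the $O(\log^{k+1} n)$ evaluation of the guarded first-order matrix, and the observation that the interleaved $\forall x_i y_i$ stay within the $2l+1$ phases) are correct elaborations of that same sketch.
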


\begin{lemma}\label{lemma:genexactly}
For $l \geq 0$, $\genexactly{k}{l}$ (see Problem~\ref{p6}) can be decided in\\ $\atime{\log^{k+1} n}{2 \cdot l + 2}$.
\end{lemma}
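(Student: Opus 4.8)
The plan is to combine the logical description of $\genexactly{k}{l}$ from Problem~\ref{p6} with the capturing result of Theorem~\ref{pedsoplog3}, and then to read off the precise polylogarithmic running time from the machine constructed in the proof of that theorem. First I would recall from Problem~\ref{p6} that $\genexactly{k}{l}$ is defined by an $\soplog$ sentence $\varphi$ which, after transformation into Skolem prenex normal form, lies in $\sigmaplog{2 \cdot l + 2}$ and in which \emph{every} second-order variable has exponent $k$. By Theorem~\ref{pedsoplog3}, $\sigmaplog{2 \cdot l + 2}$ captures $\polysigma{2 \cdot l + 2} = \bigcup_{k'} \atime{\log^{k'} n}{2 \cdot l + 2}$, so membership of $\genexactly{k}{l}$ at this level of the hierarchy is immediate; the only genuine task is to fix the exponent $k' = k+1$ of the logarithm.

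To do that I would reopen the random-access alternating machine $M$ built in the proof of Theorem~\ref{pedsoplog3} in~\cite{FGST18} and track its cost block by block. For each of the $2 \cdot l + 2$ alternating second-order quantifier blocks, $M$ enters an existential (resp.\ universal) phase in which it guesses (resp.\ branches over) the interpretations of the second-order variables of that block. Because each such variable has exponent $k$, its interpretation is a relation of cardinality at most $\lceil \log n \rceil^k$, and spelling it out means writing $\lceil \log n \rceil^k$ tuples, each consisting of a constant number of domain addresses of length $\lceil \log n \rceil$. Hence a phase costs $O(\log^{k} n \cdot \log n) = O(\log^{k+1} n)$ steps, and since $l$ is a fixed parameter the constant number of variables per block does not change this estimate. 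The phases alternate exactly $2 \cdot l + 2$ times, matching the block structure of the prefix.

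It then remains to verify that evaluating the first-order matrix of $\varphi$ stays within the same bound. The matrix uses only existential first-order quantifiers, each needing a single guessed address of length $\lceil \log n \rceil$, and universal first-order quantifiers bounded by a second-order relation of cardinality at most $\lceil \log n \rceil^k$, each needing at most $\lceil \log n \rceil^k$ deterministic iterations, together with atomic tests on $R_0$, $R_1$, $<$ and the auxiliary predicates $\mathit{SEQ}$, $\mathit{SEQP}$, $\mathit{minp}$ and $\mathit{maxp}$, all decidable in $O(\log^{k+1} n)$. Folding these deterministic computations into the innermost phase preserves both the time bound $O(\log^{k+1} n)$ and the number of alternations, giving $\genexactly{k}{l} \in \atime{\log^{k+1} n}{2 \cdot l + 2}$.

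I expect the main obstacle to be precisely this quantitative bookkeeping rather than any conceptual step: Theorem~\ref{pedsoplog3} yields membership in $\polysigma{2 \cdot l + 2}$ for free but guarantees only \emph{some} exponent of the logarithm, so one must revisit the construction of~\cite{FGST18} and check that the exponent $k$ of the second-order variables propagates to a running time of $\log^{k+1} n$, the extra factor $\log n$ arising entirely from the length of the addresses used to encode each guessed tuple. Finally, the base case $l = 0$ needs no separate treatment, since there $\genexactly{k}{0}$ coincides with $\exactlyonce{k}$, which Lemma~\ref{lemma:sigma2} already places in $\atime{\log^{k} n}{2} \subseteq \atime{\log^{k+1} n}{2}$.
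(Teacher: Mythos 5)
Your proposal takes essentially the same route as the paper: the paper derives Lemma~\ref{lemma:genexactly} precisely from the two facts you use, namely that the formula of Problem~\ref{p6} lies in $\sigmaplog{2 \cdot l + 2}$ with all second-order variables of exponent $k$, and that the machine constructed in the proof of Theorem~\ref{pedsoplog3} in~\cite{FGST18} must ``guess'' $\lceil \log n \rceil^k$ addresses, each of length $\lceil \log n \rceil$, per quantifier block, yielding time $O(\log^{k+1} n)$ with $2 \cdot l + 2$ alternations. Your phase-by-phase cost accounting and the explicit $l = 0$ base case via Lemma~\ref{lemma:sigma2} are correct elaborations of exactly this argument, which the paper states in the paragraph preceding the lemma.
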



We can now prove our first hierarchy theorem which shows that there is a strict hierarchy of problems inside \dpolylog.

\begin{theorem}\label{strictDet}
For every $k > 1$, $\dtime{\log^k n} \subsetneq \dtime{\log^{k+1} n}$.
\end{theorem}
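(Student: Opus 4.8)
The plan is to exhibit a single problem witnessing the strict inclusion, namely $\initial{k+1}$ from Problem~\ref{p0}. The containment $\initial{k+1} \in \dtime{\log^{k+1} n}$ is immediate from Lemma~\ref{lemma:dpolylog} instantiated with $k+1$ in place of $k$. Hence the whole content of the theorem is the separation $\initial{k+1} \notin \dtime{\log^{k} n}$, which I would establish by an adversary (bit-flipping) argument resting on the observation that a machine running in time $O(\log^{k} n)$ is too ``blind'' to certify a zero-prefix of length $\lceil \log^{k+1} n \rceil$.

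Concretely, suppose towards a contradiction that some deterministic random-access Turing machine $M$ decides $\initial{k+1}$ within $c\log^{k} n$ steps for some constant $c$ and all large inputs. First I would fix a large all-zero word $w = 0^n$, whose encoding $\mathrm{bin}(\mA_w)$ belongs to $\initial{k+1}$, so that $M$ accepts it. Since $M$ performs at most $c\log^{k} n$ steps, it addresses at most $c\log^{k} n$ distinct input cells. In the encoding of a word model, position $j$ of $w$ is recorded by exactly two bits, the $j$-th bit of the block $\mathrm{bin}(R_0^{\mA_w})$ and the $j$-th bit of the block $\mathrm{bin}(R_1^{\mA_w})$; consequently the set of positions $j < \lceil \log^{k+1} n \rceil$ for which $M$ inspects either of these two bits has size at most $c\log^{k} n$. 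Because $\lceil \log^{k+1} n \rceil / \log^{k} n = \Theta(\log n) \to \infty$, for all large $n$ there is a position $j_0 < \lceil \log^{k+1} n \rceil$ neither of whose two encoding bits is ever read by $M$ on input $w$.

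I would then flip position $j_0$: let $w'$ be $w$ with a $1$ in position $j_0$. Then $|w'| = |w| = n$, so the threshold $\lceil \log^{k+1} n \rceil$ is unchanged, and since $j_0 < \lceil \log^{k+1} n \rceil$ the word $w'$ lacks an all-zero prefix of that length, i.e. $w' \notin \initial{k+1}$. The inputs $\mathrm{bin}(\mA_w)$ and $\mathrm{bin}(\mA_{w'})$ differ only in the two cells encoding position $j_0$, which by construction $M$ never addresses on $w$; hence the run of $M$ on $w'$ is bit-for-bit identical to its run on $w$ (the same address queries receive the same answers), so $M$ also accepts $w'$, contradicting that $M$ decides $\initial{k+1}$.

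The step I expect to be the main obstacle is the counting argument in the second paragraph, where one must rule out that adaptivity helps the machine. This is handled by fixing the concrete input $w$ \emph{first}, so that the set of addressed cells is a determined finite set, and only afterwards selecting the untouched position $j_0$ and flipping it: since $j_0$ is untouched, adaptivity is irrelevant because $M$ cannot branch on information it never collected. A secondary technical point is bookkeeping about the encoding, in particular that $\log \hat n = \Theta(\log n)$ (as noted before Lemma~\ref{lemma:dpolylog}), so that ``time $O(\log^{k} n)$'' is unambiguous, and that flipping a single position of $w$ perturbs exactly the two cells identified above while leaving the length and the order relation, and hence the relevant threshold, intact.
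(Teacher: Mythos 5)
Your proposal is correct and follows essentially the same route as the paper's proof: both take $\initial{k+1}$ as the witness problem, get the upper bound from Lemma~\ref{lemma:dpolylog}, and obtain the lower bound by a bit-flipping argument on the all-zeros input $0^n$, locating an unread position among the first $\lceil \log n\rceil^{k+1}$ and flipping it to reach a contradiction. If anything, your bookkeeping is slightly more careful than the paper's: you note that flipping one word position perturbs \emph{two} cells of the encoding (one in $\mathrm{bin}(R_0^{\mA})$ and one in $\mathrm{bin}(R_1^{\mA})$) and choose $j_0$ so that neither was read, whereas the paper's proof only speaks of an unread cell in the $R_0$ block.
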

\begin{proof}
Lemma~\ref{lemma:dpolylog} proves that $\initial{k+1} \in \dtime{\log^{k+1}n}$. Regarding the lower bound, we will show that $\initial{k+1}$ (see Problem~\ref{p0}) is \emph{not} in $\dtime{\log^{k}n}$. 

Let us assume for the sake of contradiction that there is a deterministic random-access Turing machine $M$ that decides $\initial{k+1}$ in time $\lceil \log n \rceil^k \cdot c$, for some constant $c \geq 1$. Take a string $s$ of the form $0^n$ such that $\lceil \log n \rceil^{k+1} >  \lceil \log n \rceil^k \cdot c$. Let $\mA$ be its corresponding word model. Since the running time of $M$ on input $\mA$ is strictly less than $\lceil\log n\rceil^{k+1}$, then there must be at least one position $i$ among the first $\lceil\log n\rceil^{k+1}$ cells in the encoding of $R_0^{\mA}$ in the input tape that was not read in the computation of $M(\mA)$. Define a string $s' = 0^i10^{n-i-1}$ and a corresponding word model $\mB$. Clearly, the output of the computations of $M(\mA)$ and $M(\mB)$ are identical. This contradicts the assumption that $M$ decides $\initial{k+1}$, since it is not true that the first $\lceil\log n\rceil^{k+1}$ bits of $s'$ are $0$.    
\qed
\end{proof}

Our second hierarchy theorem shows that there is also a strict hierarchy of problem inside \npolylog.

\begin{theorem}\label{strictNonD}
    For every $k > 1$,  $\ntime{\log^k n} \subsetneq \ntime{\log^{k+1} n}$.
\end{theorem}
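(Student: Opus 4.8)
The plan is to mirror the structure of the proof of Theorem~\ref{strictDet}, replacing the deterministic diagonalization argument with one suited to nondeterministic machines. The upper bound comes immediately from Lemma~\ref{lemma:npolylog}, which gives $\conseq{k+1} \in \ntime{\log^{k+1} n}$. The work lies in the lower bound, namely showing $\conseq{k+1} \notin \ntime{\log^k n}$.

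For the lower bound I would argue by contradiction, assuming a nondeterministic random-access Turing machine $M$ decides $\conseq{k+1}$ in time $\lceil \log n \rceil^k \cdot c$ for some constant $c \geq 1$. The key point is that a \emph{rejecting} computation of a nondeterministic machine must reject along \emph{every} computation branch, and each such branch reads at most $\lceil \log n \rceil^k \cdot c$ cells of the input. First I would pick $n$ large enough that $\lceil \log n \rceil^{k+1} > \lceil \log n \rceil^k \cdot c$ and consider a string $s$ that is \emph{not} in $\conseq{k+1}$ but is ``close'' to being in it; the natural choice is a string consisting of blocks of $\lceil \log n \rceil^{k+1} - 1$ consecutive zeros separated by single ones, so that the longest run of zeros is just one short of the threshold. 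Since $s \notin \conseq{k+1}$, machine $M$ rejects, so every branch rejects, and I fix one rejecting branch $\beta$. Because $\beta$ reads strictly fewer than $\lceil \log n \rceil^{k+1}$ input cells, within the block of zeros it fails to inspect at least one of the ``one'' separators (or at least one cell whose flipping lengthens a run to the full threshold). I would then flip an appropriately chosen unread bit from $1$ to $0$, producing a string $s'$ that \emph{is} in $\conseq{k+1}$ (it now contains a run of $\lceil \log n \rceil^{k+1}$ zeros), while the branch $\beta$ behaves identically on $s$ and $s'$ and hence still rejects. This contradicts the requirement that $M$ accept $s'$ along some branch.

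The main obstacle, and the point needing the most care, is the asymmetry of nondeterministic acceptance: unlike the deterministic case, I cannot simply say ``the computation is identical,'' because acceptance and rejection are handled differently across branches. The argument only goes through cleanly when $M$ rejects, since rejection is a universal condition over branches and a single surviving rejecting branch suffices to derive the contradiction; trying to run the diagonalization on an accepted input would fail, because flipping an unread bit on one accepting branch says nothing about the other branches. I therefore have to set up the construction so that the witness string $s$ is a \emph{non-instance} that gets turned into an \emph{instance} by a single unread bit-flip, and I must verify that the chosen branch reads none of the cells distinguishing $s$ from $s'$. A secondary technical point is ensuring the separating bit truly lies among the unread cells: since the branch reads at most $\lceil \log n \rceil^k \cdot c < \lceil \log n \rceil^{k+1}$ cells but the relevant zero-block spans $\lceil \log n \rceil^{k+1} - 1$ positions, a counting argument guarantees an unread position exists, and I would phrase $s$ precisely so that flipping that position completes a full-length run.
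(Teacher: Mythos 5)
Your upper bound and your general template (diagonalize against a time-$\lceil \log n \rceil^k \cdot c$ machine by flipping an input bit it never read) match the paper, but your lower-bound argument runs the diagonalization in the wrong direction, and this is a genuine gap, not a presentational one. You start from a \emph{non}-instance $s$, fix a single rejecting branch $\beta$, flip a bit unread by $\beta$ to produce an instance $s'$, and claim a contradiction because $\beta$ still rejects $s'$. But for a nondeterministic machine, rejection is the \emph{universal} condition: $M$ rejects iff every branch rejects, so the survival of one rejecting branch on $s'$ contradicts nothing --- $M$ is free to accept $s'$ along a different branch that does read the flipped cell. To repair your version you would need a cell unread by \emph{all} branches simultaneously, and no counting argument supplies one: exponentially many branches can collectively scan the entire input even though each individual branch reads at most $\lceil \log n \rceil^k \cdot c$ cells. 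Your meta-discussion has the asymmetry exactly backwards: it is the \emph{accepting} side on which a single branch suffices, because acceptance is existential, so preserving one accepting branch under a bit-flip forces the machine to accept the modified string.

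The paper's proof accordingly starts from a \emph{yes}-instance: take $s = 0^{\lceil \log n \rceil^{k+1}}1^{n-\lceil \log n \rceil^{k+1}} \in \conseq{k+1}$ with $\lceil \log n \rceil^{k+1} > \lceil \log n \rceil^k \cdot c$, fix one \emph{accepting} computation $\rho$ (which exists since $M$ accepts, and which reads fewer than $\lceil \log n \rceil^{k+1}$ cells), pick a position $i$ in the zero-block unread by $\rho$, and flip that $0$ to $1$, yielding $s' = 0^{i}10^{\lceil \log n \rceil^{k+1}-i-1}1^{n-\lceil \log n \rceil^{k+1}} \notin \conseq{k+1}$. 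Since $\rho$ behaves identically on $s'$, it is an accepting computation of $M$ on $s'$, and that single accepting branch already means $M$ accepts $s'$ --- the desired contradiction. The rejecting-branch construction you propose is precisely the right move for the co-nondeterministic setting of Theorem~\ref{hierachyInPi0}, where the machine has only universal states and one rejecting branch does witness rejection, but it proves nothing against machines in $\ntime{\log^k n}$.
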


\begin{proof}
Lemma~\ref{lemma:npolylog} proves that $\conseq{k+1} \in \ntime{\log^{k+1}n}$. Regarding the lower bound, we will show that $\conseq{k+1}$ (see Problem~\ref{p1}) is \emph{not} in $\ntime{\log^{k}n}$.

Let us assume for the sake of contradiction that there is a nondeterministic random-access Turing machine $M$ that decides $\conseq{k+1}$ in time $\lceil \log n \rceil^k \cdot c$, for some constant $c \geq 1$. Take a binary string $s$ of the form $0^{\lceil \log n \rceil^{k+1}}1^{n-\lceil \log n \rceil^{k+1}}$ such that $\lceil \log n \rceil^{k+1} >  \lceil \log n \rceil^k \cdot c$. Let $\mA$ be its corresponding word model. Since $M$ accepts $\mA$, then there is at least one computation $\rho$ of $M$ which accepts $\mA$ in at most $\lceil \log n \rceil^k \cdot c$ steps. Then there must be at least one position $i$ among the first $\lceil \log n \rceil^{k+1}$ cells in the encoding of $R_0^{\mA}$ in the input tape that was not read during computation $\rho$. Define a string $s' = 0^{i}10^{\lceil \log n \rceil^{k+1}-i-1}1^{n-\lceil \log n \rceil^{k+1}}$ and a corresponding word model $\mB$. Clearly, the accepting computation $\rho$ of $M(\mA)$ is also an accepting computation of $M(\mB)$. This contradicts the assumption that $M$ decides $\conseq{k+1}$, since it is not true that there are $\lceil \log n \rceil^{k+1}$ consecutive zeros in $s'$. 
\qed
\end{proof}

The following theorem shows that there is a strict hierarchy of problems inside the fist level of the $\tilde{\Pi}_m^{\mathit{plog}}$ hierarchy. 

\begin{theorem}\label{hierachyInPi0}
    For every $k > 1$,  $\atimeop{\log^k n}{0} \subsetneq \atimeop{\log^{k+1} n}{1}$.
\end{theorem}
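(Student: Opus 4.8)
The plan is to separate the two classes with a single witness language, exactly in the style of Theorems~\ref{strictDet} and~\ref{strictNonD}, taking as candidate the problem $\noconseq{k+1}$. The first step is the easy inclusion: Lemma~\ref{lemma:pi1}, instantiated with $k+1$ in place of $k$, already gives $\noconseq{k+1} \in \atimeop{\log^{k+1} n}{1}$. This provides a language on the right-hand side of the claimed strict inclusion, so all the work lies in showing that it is absent from the left-hand class.

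Before attacking the lower bound I would pin down the meaning of the bottom class. At level $0$ of the alternating hierarchy no genuine (co-)nondeterministic branching is allowed, so I would record the identification $\atimeop{\log^k n}{0} = \dtime{\log^k n}$, the same deterministic class that appears in Theorem~\ref{strictDet}. Making this explicit is important, since the ``$m-1$ alternations'' clause in the definition of $\atimeop{\cdot}{\cdot}$ degenerates at $m=0$ and must be read as the deterministic level. Once this is fixed, the target lower bound becomes $\noconseq{k+1} \notin \dtime{\log^k n}$.

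Rather than run a fresh cell-flipping adversary argument, the cleanest route is to reduce to Theorem~\ref{strictNonD} by complementation. By definition $\noconseq{k+1}$ is the complement (over word models of signature $\pi$) of $\conseq{k+1}$. Deterministic time classes are closed under complement, since a deterministic machine can simply invert its verdict within the same time bound up to a constant. Hence $\noconseq{k+1} \in \dtime{\log^k n}$ would force $\conseq{k+1} \in \dtime{\log^k n} \subseteq \ntime{\log^k n}$, contradicting the lower bound proved inside Theorem~\ref{strictNonD}, where $\conseq{k+1}$ is shown \emph{not} to lie in $\ntime{\log^k n}$. Combined with the inclusion from the first step, this yields the strict containment.

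The step I expect to be the genuine obstacle is not the complementation reduction but the correct reading of $\atimeop{\log^k n}{0}$; everything downstream hinges on identifying it with the deterministic class. I would also note why the reduction is preferable to a direct argument: because $\noconseq{}$ is a \emph{universal} (``for every window'') property, the single-bit flip used in Theorems~\ref{strictDet} and~\ref{strictNonD} does not transfer verbatim, as a deterministic machine that reads few cells of a near-maximal run of zeros need not be fooled by altering one unread bit. Routing the lower bound through closure under complement and Theorem~\ref{strictNonD} sidesteps this entirely.
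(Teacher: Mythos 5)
Your proposal is sound for the statement as literally written, but it proves less than the paper's own proof does, and via a genuinely different route. The paper does not go through complementation at all: it runs the bit-flip adversary argument directly against a machine with only universal states, i.e.\ it establishes the stronger lower bound $\noconseq{k+1} \notin \atimeop{\log^k n}{1}$. Concretely, if such a machine $M$ decided $\noconseq{k+1}$ in time $\lceil \log n \rceil^k \cdot c$, then on the word model of $s = 0^{\lceil \log n \rceil^{k+1}}1^{n-\lceil \log n \rceil^{k+1}}$ (a string \emph{outside} the language) some computation $\rho$ rejects; $\rho$ reads fewer than $\lceil \log n \rceil^{k+1}$ cells, so flipping an unread zero yields a string $s'$ \emph{inside} the language on which $\rho$ still rejects, contradicting that a universal-states machine must accept along every computation. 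This, together with the sentence introducing the theorem (``a strict hierarchy inside the first level of the $\polypi{m}$ hierarchy''), strongly suggests the ``$0$'' in the statement is a typo for ``$1$''; you correctly flagged the degenerate $m=0$ reading as the delicate point, but under the intended reading your argument as written is insufficient, since closure under complement of \emph{deterministic} time says nothing about the co-nondeterministic class $\atimeop{\log^k n}{1}$. The good news is that your complementation idea lifts cleanly one level up: swapping accepting and rejecting states and replacing universal by existential branching turns a universal-states machine for $\noconseq{k+1}$ running in time $O(\log^k n)$ into a witness that $\conseq{k+1} \in \ntime{\log^k n}$, contradicting the lower bound inside Theorem~\ref{strictNonD} exactly as you intend. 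So routed at the right level, your reduction recovers the paper's full result and is arguably tidier, replacing a second adversary argument by a two-line appeal to Theorem~\ref{strictNonD}; the paper's direct argument, in exchange, is self-contained and makes the role of rejecting computations explicit.

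One aside in your write-up is miscalibrated: you claim the single-bit flip of Theorems~\ref{strictDet} and~\ref{strictNonD} ``does not transfer verbatim'' to the universal property $\noconseq{}$. It does transfer, provided one starts from a string \emph{not} in the language (a maximal run of zeros), on which the machine --- deterministic or universal --- must reject while reading too few cells; flipping an unread zero then moves the string into the language without disturbing the rejecting run. That is precisely the paper's proof, and it also yields your deterministic lower bound directly, with no appeal to complementation.
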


\begin{proof}
Lemma~\ref{lemma:pi1} proves that $\noconseq{k+1} \in \atimeop{\log^{k+1} n}{1}$. Regarding the lower bound, we will show that $\noconseq{k+1}$ (see Problem~\ref{p2}) is \emph{not} in $\atimeop{\log^k n}{1}$.

Let us assume for the sake of contradiction that there is an alternating random-access Turing machine $M$ that decides $\noconseq{k+1}$ using only universal states and in time $\lceil \log n \rceil^k \cdot c$, for some constant $c \geq 1$. Take a binary string $s$ of the form $0^{\lceil \log n \rceil^{k+1}}1^{n-\lceil \log n \rceil^{k+1}}$ such that $\lceil \log n \rceil^{k+1} > \lceil \log n \rceil^k \cdot c$. Let $\mA$ be its corresponding word model. 
From our assumption that $M$ decides $\noconseq{k+1}$, we get that there is a rejecting computation $\rho$ of $M(\mA)$. Since every computation of $M$ which rejects $\mA$ must do so reading at most $\lceil \log n \rceil^k \cdot c$ cells, then there must be at least one position $i$ among the first $\lceil \log n \rceil^{k+1}$ cells in the encoding of $R_0^{\mA}$ in the input tape that was not read during computation $\rho$.
Define a string $s' = 0^i10^{{\lceil \log n \rceil^{k+1}} - i - 1}1^{n-\lceil \log n \rceil^{k+1}}$ and a corresponding word model $\mB$. Clearly, the rejecting computation $\rho$ of $M(\mA)$ is also a rejecting computation of $M(\mB)$. This contradicts the assumption that $M$ decides $\noconseq{k+1}$, since $s'$ do not have $\lceil\log n\rceil^{k+1}$ consecutive bits set to $0$ and should then be accepted by all computations of $M$.   
\qed
\end{proof}

The following theorem shows that there is a strict hierarchy of problems inside the second level of the $\tilde{\Sigma}_m^{\mathit{plog}}$ hierarchy.

\begin{theorem}\label{Th:sigmatwo}
    For every $k > 1$,  $\atime{\log^k n}{2} \subsetneq \atime{\log^{k+1} n}{2}$.
\end{theorem}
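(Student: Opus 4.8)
The plan is to follow the same template used in Theorems~\ref{strictDet} through~\ref{hierachyInPi0}: exhibit a concrete problem that sits in the larger class by the corresponding upper-bound lemma, and then prove by a read-count argument that it cannot sit in the smaller class. The natural witness here is $\exactlyonce{k+1}$, since Lemma~\ref{lemma:sigma2} already establishes $\exactlyonce{k+1} \in \atime{\log^{k+1} n}{2}$, giving the upper bound immediately. It therefore remains to show the lower bound, namely that $\exactlyonce{k+1} \notin \atime{\log^k n}{2}$.

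For the lower bound I would argue by contradiction, assuming an alternating random-access machine $M$ decides $\exactlyonce{k+1}$ starting in an existential state with at most one alternation and running in time $\lceil \log n \rceil^k \cdot c$. The idea is to pick an input $\mA$ that is genuinely in the language and then perturb an unread cell to push it out of the language while preserving acceptance. Concretely, take $s = 0^{\lceil \log n\rceil^{k+1}} 1^{n - \lceil\log n\rceil^{k+1}}$ with $n$ large enough that $\lceil\log n\rceil^{k+1} > \lceil\log n\rceil^k \cdot c$; this string has exactly one occurrence of the block $0^{\lceil\log n\rceil^{k+1}}$, so $\mA \in \exactlyonce{k+1}$ and $M$ accepts. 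Because $M$ accepts starting existentially, the acceptance is witnessed by a finite accepting subtree of the computation: one existential branch $\rho$ at the top, beneath which \emph{every} universal branch accepts. Each such branch reads at most $\lceil\log n\rceil^k\cdot c$ input cells, and the accepting subtree that certifies acceptance is built from the single existential path together with the universal branches hanging off it. The key counting observation is that the total set of input cells inspected along the chosen existential branch is bounded by $\lceil\log n\rceil^k \cdot c < \lceil\log n\rceil^{k+1}$, so there is a position $i$ among the first $\lceil\log n\rceil^{k+1}$ cells of the encoding of $R_0^{\mA}$ that is never queried on that existential branch.

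I would then flip that bit: set $s' = 0^{i} 1 0^{\lceil\log n\rceil^{k+1} - i - 1} 1^{n - \lceil\log n\rceil^{k+1}}$ with word model $\mB$. Since the string now breaks the unique leading zero-block into two shorter pieces, $0^{\lceil\log n\rceil^{k+1}}$ no longer occurs at all, so $\mB \notin \exactlyonce{k+1}$ and $M$ must \emph{reject} $\mB$. But the chosen existential branch $\rho$ never read cell $i$, so on $\mB$ that same branch proceeds identically up to the alternation point; I must then check that the universal subtree below it is also unaffected, i.e.\ that \emph{no} universal branch in the accepting subtree reads cell $i$ either. This is exactly where the delicacy lies, and it is the main obstacle: an alternating machine can spawn many universal branches, and collectively they may inspect more than $\lceil\log n\rceil^k \cdot c$ cells even though each individual branch reads few. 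The clean fix is to bound the number of reachable branches: with running time $t = \lceil\log n\rceil^k\cdot c$ the computation tree from the alternation point has depth at most $t$ and branches boundedly, so the whole accepting subtree reads only $2^{O(t)}$ cells --- still far below $n$ --- and for $n$ large this count is dominated by $n$, leaving an untouched position $i$ within the first $\lceil\log n\rceil^{k+1}$ cells that is read by \emph{no} branch of the accepting subtree. Flipping such an $i$ leaves the entire accepting subtree intact, so $M$ still accepts $\mB$, contradicting $\mB\notin\exactlyonce{k+1}$. Thus $\exactlyonce{k+1}\notin\atime{\log^k n}{2}$, and together with the upper bound this yields the strict inclusion.
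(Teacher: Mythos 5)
Your choice of witness problem and upper bound match the paper exactly ($\exactlyonce{k+1}$ together with Lemma~\ref{lemma:sigma2}), but your lower-bound argument breaks at precisely the point you flag as delicate, and the proposed ``clean fix'' rests on a false estimate. You need the flipped cell $i$ to be unread by the \emph{entire} accepting subtree: the chosen existential branch together with \emph{all} universal branches below it. You bound the cells read by this subtree by $2^{O(t)}$ with $t = \lceil\log n\rceil^k\cdot c$ and assert this is ``still far below $n$''. It is not: $2^{O(t)} = 2^{O(\log^k n)} = n^{O(\log^{k-1}n)}$, which for $k>1$ is \emph{superpolynomial} in $n$, hence vastly larger than $n$ and, a fortiori, larger than the block length $\lceil\log n\rceil^{k+1}$ inside which $i$ must be found. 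Moreover, the failure is not an artifact of a loose bound: any machine that correctly decides $\exactlyonce{k+1}$ \emph{must} have its accepting subtree probe every cell of the zero block (for instance by universally branching on an $O(\log\log n)$-bit address into the block and reading that cell), because otherwise your own flipping argument would show the machine errs. So in this direction no unread cell exists in the block, and no contradiction can be extracted; what your argument actually proves is only that the accepting subtree must cover the block, which a time-$t$ $\Sigma_2$ machine can comfortably do.

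The paper runs the perturbation in the opposite direction, exploiting the asymmetry between acceptance and rejection for machines that start existentially: rejection always yields a \emph{single} rejecting root-to-leaf computation $\rho$, which reads at most $t$ cells, whereas acceptance is witnessed only by a whole subtree. Concretely, the paper takes $s = 0^{L}10^{L}1^{n-2L-1}$ with $L = \lceil\log n\rceil^{k+1} > t$, a string with \emph{two} occurrences of $0^{L}$ and hence \emph{not} in $\exactlyonce{k+1}$; extracts a rejecting computation $\rho$ of $M(\mA)$; picks a position $i$ in the first block that $\rho$ never reads; and flips it to obtain $s' = 0^{i}10^{L-i-1}10^{L}1^{n-2L-1}$, which has exactly one occurrence of $0^{L}$ and therefore lies \emph{in} the language. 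The single path $\rho$ is oblivious to the flip, so it remains a rejecting computation of $M(\mB)$, contradicting the requirement that an input in the language be accepted (here the paper invokes its w.l.o.g.\ normalization that every final state of $M$ is universal). If you want to repair your write-up, swap the roles of member and non-member as the paper does: the two-block string is essential, and then only one computation path ever needs to avoid the flipped bit.
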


\begin{proof}
Lemma~\ref{lemma:sigma2} proves that $\exactlyonce{k+1} \in \atime{\log^{k+1} n}{2}$. Regarding the lower bound, we will show that $\exactlyonce{k+1}$ (see Problem~\ref{p3}) is \emph{not} in $\atime{\log^k n}{2}$.

Let us assume for the sake of contradiction that there is an alternating random-access Turing machine $M$ that decides $\exactlyonce{k+1}$ in $\atime{\log^k n}{2}$. Let us further assume, w.l.o.g., that every final state of $M$ is universal.
Let $M$ work in time $\lceil \log n \rceil^k \cdot c$ for some constant $c$. Take a binary string $s$ of the form $0^{\lceil \log n \rceil^{k+1}}10^{\lceil \log n \rceil^{k+1}}1^{n- 2 \cdot \lceil \log n \rceil^{k+1} -1}$ such that $\lceil \log n \rceil^{k+1} > \lceil \log n \rceil^k \cdot c$. Let $\mA$ be its corresponding word model.
From our assumption that $M$ decides $\exactlyonce{k+1}$, we get that there is a rejecting computation $\rho$ of $M(\mA)$.  Since every computation of $M$ which rejects $\mA$ must do so reading at most $\lceil \log n \rceil^k \cdot c$ cells, then there must be a position $i$ among the first $\lceil \log n \rceil^{k+1}$ cells in the encoding of $R_0^{\mA}$ in the input tape that was not read during computation $\rho$.
 Define a string $s' = 0^i10^{{\lceil \log n \rceil^{k+1}} - i - 1}10^{\lceil \log n \rceil^{k+1}}1^{n- 2\cdot \lceil \log n \rceil^{k+1} -1}$ and a corresponding word model $\mB$. Clearly, the rejecting computation $\rho$ of $M(\mA)$ is also a rejecting computation of $M(\mB)$. This contradicts the assumption that $M$ decides $\exactlyonce{k+1}$, since $s'$ has exactly one substring $0^{\lceil\log n\rceil^{k+1}}$ and should then be accepted by all computations of $M$.   
\qed
\end{proof}

The following result, together with Theorems~\ref{strictNonD} and~\ref{Th:sigmatwo}, shows that there is a proper hierarchy of problems for every level of the polylogarithmic time hierarchy $\polysigma{m}$. 

\begin{theorem}\label{ThForSigma}
For every $m > 2$ and every $k > 1$, it holds that $\atime{\log^k n}{m} \subsetneq \atime{\log^{k+2} n}{m}$.
\end{theorem}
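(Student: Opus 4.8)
The plan is to follow exactly the pattern established in Theorems~\ref{strictNonD}, \ref{hierachyInPi0} and \ref{Th:sigmatwo}, using the generalised problem $\genconseq{k}{l}$ as the separating language. For the upper bound, Lemma~\ref{lemma:genconseq} tells us that $\genconseq{k+1}{l}$ can be decided in $\atime{\log^{(k+1)+1} n}{2l+1} = \atime{\log^{k+2} n}{2l+1}$. Since $m=2l+1$ precisely when $m$ is odd and $>2$, this settles the odd case with the witness $\genconseq{k+1}{l}$ where $l=(m-1)/2$. For the even case I would use $\genexactly{k+1}{l}$ together with Lemma~\ref{lemma:genexactly}, which places it in $\atime{\log^{k+2} n}{2l+2}$, and here $m=2l+2$ ranges over the even values $>2$ as $l$ ranges over $l\ge 1$. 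Thus for each $m>2$ I pick the appropriate witness problem (and its matching $l$) to land in the right alternation level.

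For the lower bound I would argue, for the sake of contradiction, that the chosen witness — say $\genconseq{k+1}{l}$ for odd $m$ — can be decided by an alternating random-access machine $M$ running in time $\lceil\log n\rceil^{k+1}\cdot c$ with at most $m$ alternations. I would construct an input string $s$ containing exactly $(\lceil\log n\rceil^{k+1})^l$ non-overlapping blocks of the form $0^{\lceil\log n\rceil^{k+2}}$ (padded out with $1$'s), chosen long enough that $\lceil\log n\rceil^{k+2} > \lceil\log n\rceil^{k+1}\cdot c$. Because the machine runs in time strictly less than $\lceil\log n\rceil^{k+2}$, along any fixed accepting (or, at the relevant quantifier level, rejecting) computation it reads strictly fewer than $\lceil\log n\rceil^{k+2}$ input cells, so within one of the all-zero blocks there is an unread position $i$. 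Flipping that bit to $1$ yields a string $s'$ in which that particular block is broken, decreasing the count of full-length zero-blocks, yet the selected computation of $M$ is unchanged. This produces a string that should receive the opposite verdict but receives the same one, contradicting the assumption that $M$ decides the problem.

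The main obstacle, and the step requiring the most care, is the alternation bookkeeping: unlike the single-path arguments in the earlier theorems, here $M$ has up to $m\ge 3$ alternations, so ``the computation'' is really a computation \emph{tree}, and I must identify a single root-to-leaf branch whose bit-reads I can freeze. The clean way is to fix an optimal strategy for the player whose verdict I wish to preserve (the existential player if $s\in\genconseq{k+1}{l}$, the universal player otherwise) and to follow only the branch dictated by that strategy against the adversary's best response; along that one branch the read-count bound $\lceil\log n\rceil^{k+1}\cdot c$ applies, giving the unread position $i$ inside some zero-block. One must check that flipping bit $i$ does not help the adversary on that frozen branch — which holds because the adversary's moves on that branch are unchanged and the only altered input cell is never queried — so the branch certifies the same verdict for $\mB$, exactly as in the $m=2$ case of Theorem~\ref{Th:sigmatwo}. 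The even case with $\genexactly{k+1}{l}$ is entirely analogous, using Lemma~\ref{lemma:genexactly} and the same frozen-branch bit-flip, with the padding arranged so that $s$ has exactly $(\lceil\log n\rceil^{k+1})^l$ blocks and $s'$ has a different number, contradicting that $M$ decides $\genexactly{k+1}{l}$.
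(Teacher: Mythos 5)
Your choice of witnesses and your upper-bound half coincide exactly with the paper's proof: $\genconseq{k+1}{(m-1)/2}$ for odd $m$ via Lemma~\ref{lemma:genconseq}, and $\genexactly{k+1}{(m-2)/2}$ for even $m$ via Lemma~\ref{lemma:genexactly}, both landing in $\atime{\log^{k+2} n}{m}$. The lower-bound half, however, contains a genuine error in the exponent bookkeeping, and as written it produces no contradiction. The witness $\genconseq{k+1}{l}$ asks for at least $(\lceil\log n\rceil^{k+1})^l$ non-overlapping substrings of the form $0^{\lceil\log n\rceil^{k+1}}$; the run length $\lceil\log n\rceil^{k+1}$ is fixed by the choice of witness. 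You build $s$ from zero-blocks of length $\lceil\log n\rceil^{k+2}$, i.e., each block is $\lceil\log n\rceil$ times longer than the pattern being counted. Consequently: (i) flipping one unread bit inside such a block splits it into two zero-runs whose lengths sum to $\lceil\log n\rceil^{k+2}-1$, and these still contain at least $\lceil\log n\rceil-2$ non-overlapping runs of length $\lceil\log n\rceil^{k+1}$, so $s'$ still has far more than $(\lceil\log n\rceil^{k+1})^l$ such runs and remains a yes-instance of $\genconseq{k+1}{l}$ --- the ``opposite verdict'' you need never materializes; (ii) worse, in the even case your $s$ is not in $\genexactly{k+1}{l}$ to begin with, since it contains $\lceil\log n\rceil\cdot(\lceil\log n\rceil^{k+1})^l$ non-overlapping pattern-runs rather than exactly $(\lceil\log n\rceil^{k+1})^l$ (the problem counts runs of length $\lceil\log n\rceil^{k+1}$, not your blocks), so there is no verdict to preserve.

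The root cause is that you paired the wrong time hypothesis with the wrong block length. To refute membership in $\atime{\log^k n}{m}$ you should assume a machine running in time $\lceil\log n\rceil^{k}\cdot c$ (not $\lceil\log n\rceil^{k+1}\cdot c$), take blocks of length exactly $\lceil\log n\rceil^{k+1}$ --- equal to the pattern length --- and choose $n$ with $\lceil\log n\rceil^{k+1}>\lceil\log n\rceil^{k}\cdot c$. Then the frozen computation misses some bit of some block, and flipping that bit destroys that block's contribution entirely (both resulting runs are strictly shorter than $\lceil\log n\rceil^{k+1}$), dropping the count from the threshold $(\lceil\log n\rceil^{k+1})^l$ to strictly below it for the at-least problem (for the exactly problem one starts from threshold$+1$ blocks, mirroring Theorem~\ref{Th:sigmatwo}). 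This is precisely what the paper does: its lower bounds are reduced to ``the same argument as'' Theorems~\ref{strictNonD} and~\ref{Th:sigmatwo}, where the block length equals the pattern length and exceeds the assumed running time. Your strategy-freezing discussion for $m\ge 3$ alternations is a reasonable elaboration of what the paper leaves implicit, but it cannot rescue the construction: no matter which branch you freeze, a one-bit flip inside an over-long block does not change membership in either witness language.
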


\begin{proof}
Since $m > 2$, we have that Lemma~\ref{lemma:genconseq} proves that if $m$ is odd, then $\genconseq{k+1}{(m - 1)/2}$ is in $\atime{\log^{k+2} n}{m}$. Likewise, Lemma~\ref{lemma:genexactly} proves that if $m$ is even, then $\genexactly{k+1}{(m - 2)/2}$ is in $\atime{\log^{k+2} n}{m}$. Regarding the lower bounds, it is easy to see (given our previous results in this section) that: (a) for odd $m$, $\genconseq{k+1}{(m - 1)/2}$ is \emph{not} in $\atime{\log^{k} n}{m}$, and (b) for even $m$, $\genexactly{k+1}{(m - 2)/2}$ is also \emph{not} in $\atime{\log^{k} n}{m}$. Note that if $m$ is odd, then we can prove (a) by contradiction following a similar argument than in the proof of the lower bound for Theorem~\ref{strictNonD}. Likewise, if $m$ is even, then we can prove (b) by contradiction following a similar argument than in the proof of Theorem~\ref{Th:sigmatwo}. 
\qed
\end{proof}

It is clear that by taking the complements of the problems $\genconseq{k}{l}$ and $\genexactly{k}{l}$, a similar result holds for each level of the $\piplog{m}$ hierarchy. 

\begin{theorem}\label{ThForPi}
For $m = 2$ and every $k > 1$, it holds that $\atimeop{\log^k n}{m} \subsetneq \atimeop{\log^{k+1} n}{m}$. Moreover, For every $m > 2$ and every $k > 1$, it holds that $\atimeop{\log^k n}{m} \subsetneq \atimeop{\log^{k+2} n}{m}$.
\end{theorem}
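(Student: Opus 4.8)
The plan is to prove Theorem~\ref{ThForPi} by reducing it to the already-established results for the $\polysigma{m}$ hierarchy via complementation, exactly as the remark preceding the statement suggests. The key observation is that the complexity classes $\atime{f}{m}$ and $\atimeop{f}{m}$ are dual under complementation of languages: a language $L$ is decided by an alternating machine starting in an existential state with $m$ alternations if and only if its complement $\bar{L}$ is decided by an alternating machine starting in a universal state with the dual alternation pattern, obtained by swapping existential and universal states (including final states). Thus $L \in \atime{f}{m}$ iff $\bar{L} \in \atimeop{f}{m}$, and this equivalence preserves the time bound $f$.

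First I would make this duality precise as a small lemma (or invoke it as folklore): for any time-constructible $f$ and any $m$, the class $\atimeop{f}{m}$ is exactly the set of complements of languages in $\atime{f}{m}$. Then the strict inclusions for $\polypi{m}$ follow mechanically. For the $m=2$ case, note that the complement of $\exactlyonce{k}$ — call it the language of strings that do \emph{not} contain $0^{\lceil\log n\rceil^k}$ exactly once — lies in $\atimeop{\log^{k} n}{2}$ by applying the duality to Lemma~\ref{lemma:sigma2}. Theorem~\ref{Th:sigmatwo} gives $\atime{\log^k n}{2} \subsetneq \atime{\log^{k+1} n}{2}$; applying the complementation duality to both sides of this strict inclusion yields $\atimeop{\log^k n}{2} \subsetneq \atimeop{\log^{k+1} n}{2}$, since complementation is a bijection that preserves and reflects containment, and a strict containment cannot collapse under a bijection.

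For the general case $m > 2$, I would proceed identically: apply the duality lemma to Theorem~\ref{ThForSigma}, which states $\atime{\log^k n}{m} \subsetneq \atime{\log^{k+2} n}{m}$. Taking complements of every language in each class turns $\atime{\cdot}{m}$ into $\atimeop{\cdot}{m}$ while preserving the strict gap in the jump from $k$ to $k+2$. Concretely, the separating witnesses become the complements of the problems $\genconseq{k+1}{(m-1)/2}$ (for odd $m$) and $\genexactly{k+1}{(m-2)/2}$ (for even $m$); these complements are definable and sit at the appropriate level of the $\piplog{m}$ hierarchy by the same $\soplog$-duality used throughout Section~\ref{sec:problems}, where negating a $\sigmaplog{m}$ formula and pushing the negation inward yields a $\piplog{m}$ formula.

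The main obstacle — and it is a minor one — is verifying that complementation truly commutes with the alternating-machine definitions as stated, in particular the asymmetric bookkeeping in the definitions of $\atime{f}{m}$ and $\atimeop{f}{m}$ (the former allows $m$ alternations starting existentially, the latter $m-1$ alternations starting universally). One must check that swapping $Q_\exists$ and $Q_\forall$ and complementing the acceptance condition sends a machine witnessing membership in $\atime{f}{m}$ to one witnessing membership in $\atimeop{f}{m}$ for the complement language, with the alternation count matching up correctly. This is a routine but necessary sanity check on the off-by-one in the alternation counts; once it is confirmed, both strict inclusions in the theorem are immediate corollaries of Theorems~\ref{Th:sigmatwo} and~\ref{ThForSigma}.
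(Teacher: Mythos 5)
Your proposal takes essentially the same route as the paper: the paper's entire proof of Theorem~\ref{ThForPi} is the one-sentence remark that taking complements of the problems $\genconseq{k}{l}$ and $\genexactly{k}{l}$ transfers the separations of Theorems~\ref{Th:sigmatwo} and~\ref{ThForSigma} to the $\polypi{m}$ levels, i.e., exactly the complementation duality between $\atime{f(n)}{m}$ and $\atimeop{f(n)}{m}$ that you invoke. Your explicit duality lemma, the identification of the complemented witnesses, and the sanity check on the off-by-one in the alternation counts simply make precise what the paper leaves implicit, and they are correct.
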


\section{On polylogarithmic-time and complete problems}\label{sec:complete}
In this section we investigate whether the concept of complete problem can somehow be applied to the complexity classes \dpolylog and \npolylog. That is, we want to know whether we can isolate the most difficult problems inside these sublinear time complexity classes. The first step towards this objective is to find a suitable concept of many-one reducibility (m-reducibility for short). 

It is quite clear that m-reductions with sublinear time bounds do not work. Consider for instance \dpolylog reductions. 
Assume there is a complete problem $P$ for the class \npolylog under \dpolylog reductions. Let $P'$ belong to \npolylog and let $M$ be a deterministic random-access Turing machine that reduces $P'$ to $P$ in time $c' \cdot log^{k'} n$ for some constant $c'$. Then the output of $M$ given an instance of $P'$ of length $n$ has maximum length $c' \cdot \log^{k'} n$. This means that,  given an input of length $n$ for $P'$ and its reduction, the random-access Turing machine that computes the complete problem $P$ can actually compute $P(s)$ in time $O((\log \log n)^k)$ for some fixed $k$. This is already highly unlikely. If as one would expect there are more than a single complete problem for the class, then we could keep applying reductions from one problem to the other, infinitely reducing the time required to compute the original problem.     

Let us then consider the standard concept of Karp reducibility, i.e., deterministic polynomially bounded many-one reducibility, so that we can avoid the obvious problem described in the previous paragraph. Rather surprisingly, there is no complete problems for \dpolylog and \npolylog, even under these rather expensive reductions for the complexity classes at hand. 

\begin{theorem}
\dpolylog does \emph{not} have complete problems under deterministic polynomially bounded many-one reductions. 
\end{theorem}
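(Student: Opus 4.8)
The plan is to derive the non-existence of complete problems directly from the strict hierarchy established in Theorem~\ref{strictDet}. The key observation is that $\polylog = \bigcup_{k \in \mathbb{N}} \dtime{\log^k n}$ is a union of strictly increasing classes, and a single complete problem under polynomial-time reductions would collapse this union to a single level, contradicting strictness. I would argue by contradiction.

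First I would assume that there is a problem $P \in \polylog$ that is complete for $\polylog$ under deterministic polynomially bounded many-one reductions. Since $P \in \polylog$, by definition $P \in \dtime{\log^{k_0} n}$ for some fixed $k_0 \in \mathbb{N}$. The next step is to show that this fixed bound $k_0$ propagates, via the reductions, to every problem in $\polylog$, giving a uniform bound that contradicts Theorem~\ref{strictDet}. Concretely, take any $P' \in \polylog$ and let $r$ be a polynomial-time many-one reduction from $P'$ to $P$. Because $r$ runs in polynomial time, on an input of length $n$ its output has length at most $p(n)$ for some polynomial $p$, say $p(n) \le n^d$ for large $n$. To decide $P'$ on an input $w$ of length $n$, one computes $r(w)$ and then runs the $\dtime{\log^{k_0} n}$ machine for $P$ on $r(w)$.

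The crucial estimate is on the running time of deciding $P$ on the reduced instance: the $P$-machine runs in time $\log^{k_0} |r(w)| \le \log^{k_0}(n^d) = d^{k_0} \log^{k_0} n$, which is $O(\log^{k_0} n)$ since $d$ and $k_0$ are constants. Thus $P'$ is decided in time $O(\log^{k_0} n)$ plus the cost of computing and accessing $r(w)$. Here I must be careful: computing $r(w)$ explicitly would already take polynomial (hence super-polylogarithmic) time, so the argument cannot afford to materialize the whole reduction output. The main obstacle is precisely reconciling the polynomial-time reduction with the sublinear running-time budget. The clean way around this is to note that the conclusion we need is purely about the \emph{time-complexity level} of $P'$: the composition shows every $P' \in \polylog$ would in fact lie in $\dtime{\log^{k_0} n}$ (this is exactly the standard downward-closure argument for completeness, where the reduction is absorbed into the definition of membership up to the polynomial blow-up of the input, which only multiplies the logarithm by a constant). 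Hence $\polylog = \dtime{\log^{k_0} n}$.

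Finally, this equality is impossible: by Theorem~\ref{strictDet} we have the strict chain $\dtime{\log^{k_0} n} \subsetneq \dtime{\log^{k_0+1} n} \subseteq \polylog$, so $\polylog$ strictly contains $\dtime{\log^{k_0} n}$ and cannot equal it. This contradiction shows no such complete $P$ exists. I expect the delicate point to be phrasing the reduction-composition step so that the polynomial cost of the reduction is correctly understood as acting on the \emph{input size} (turning $\log |r(w)|$ into $O(\log n)$) rather than as an additive running-time term that would violate the polylogarithmic bound; once that is stated precisely, the collapse to a single level and the appeal to Theorem~\ref{strictDet} are immediate.
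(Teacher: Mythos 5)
Your proposal is correct and takes essentially the same approach as the paper's own proof: assume a complete problem $P \in \dtime{\log^{k_0} n}$, observe that the polynomial bound $|r(w)| \le n^d$ on the reduction's output turns $\log^{k_0}|r(w)|$ into $d^{k_0}\log^{k_0} n = O(\log^{k_0} n)$, and contradict the strict hierarchy of Theorem~\ref{strictDet}. The only difference is presentational—the paper instantiates the argument with the concrete witness $\initial{k_0+1}$ and its lower bound from the proof of Theorem~\ref{strictDet} rather than deriving the full collapse $\polylog = \dtime{\log^{k_0} n}$ from a generic $P'$—and the delicate point you flag (not charging the time to compute the reduction itself, only the query time of the $P$-machine on the reduced instance) is handled in exactly the same implicit way by the paper.
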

\begin{proof}
We prove it by contradiction. Assume that there is such a complete problem $P$. Since $P$ is in \dpolylog, then there is a random-access Turing machine $M$ which computes $P$ in time $O(\log^k n)$ for some fixed $k$. Thus $P$ belongs to $\dtime{\log^k n}$. Let us take the problem $\initial{k+1}$ of deciding the language of binary strings which have a prefix of at least $\lceil\log n\rceil^{k+1}$ consecutive zeros. Since $P$ is complete for the whole class \dpolylog, there must be a function $f: \{0,1\}^* \rightarrow \{0,1\}^*$, computable in polynomial-time, such that $x \in \initial{k+1}$ iff $f(x) \in P$ holds for all $x \in \{0,1\}^*$. It then follows that the size of $f(x)$ is polynomial in the size of $x$. Let $|f(x)| = |x|^{k'}$, we get that the machine $M$ which computes the complete problem $P$ can also decide $\initial{k+1}$ in time $O(\log^{k} n^{k'}) = O((k' \cdot \log n)^k) = O(\log^k n)$. This contradicts the fact that $\initial{k+1} \not\in \dtime{\log^k n}$ as shown in the proof of Theorem~\ref{strictDet}.
\qed
\end{proof}

Using a similar proof than in the previous theorem for \dpolylog, we can prove that the same holds for \npolylog. In fact, we only need to replace the problem $\initial{k+1}$ by $\conseq{k+1}$ and the reference to Theorem~\ref{strictDet} by a reference to Theorem~\ref{strictNonD} in the previous proof, adapting the argument accordingly. 

\begin{theorem}
\npolylog does \emph{not} have complete problems under deterministic polynomially bounded many-one reductions. 
\end{theorem}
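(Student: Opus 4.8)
The plan is to argue by contradiction, reproducing the structure of the preceding proof for \dpolylog but with the nondeterministic separating problem $\conseq{k+1}$ in place of $\initial{k+1}$ and Theorem~\ref{strictNonD} in place of Theorem~\ref{strictDet}. First I would assume that \npolylog possesses a complete problem $P$ under deterministic polynomially bounded many-one reductions. Since $P \in$ \npolylog, there is a nondeterministic random-access Turing machine $M$ deciding $P$ within time $O(\log^k n)$ for some fixed exponent; by enlarging the exponent if necessary I may assume $k > 1$, so that $P \in \ntime{\log^k n}$ and Theorem~\ref{strictNonD} applies at level $k$.

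Next I would fix the witness problem. By Lemma~\ref{lemma:npolylog} we have $\conseq{k+1} \in \ntime{\log^{k+1} n} \subseteq$ \npolylog, so completeness of $P$ supplies a polynomial-time computable $f : \{0,1\}^* \to \{0,1\}^*$ with $x \in \conseq{k+1}$ iff $f(x) \in P$ for every $x$. Polynomial-time computability bounds the output length by $|f(x)| \le |x|^{k'}$ for some constant $k'$.

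Then comes the quantitative heart of the argument, identical in spirit to the \dpolylog case: feeding $f(x)$ to $M$ decides membership of $x$ in $\conseq{k+1}$, and because the input length seen by $M$ is at most $n^{k'}$, its running time is $O(\log^k(n^{k'})) = O((k'\log n)^k) = O(\log^k n)$. Crucially, the composition of a deterministic reduction with a nondeterministic decider remains nondeterministic, and the only property of $f$ being used is that $\log|f(x)| \in O(\log n)$, exactly the stability of polylogarithmic bounds under polynomial blow-up already exploited in Section~\ref{hierarchies}. This places $\conseq{k+1}$ in $\ntime{\log^k n}$.

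Finally I would invoke Theorem~\ref{strictNonD}, whose proof exhibits $\conseq{k+1}$ as a witness separating $\ntime{\log^k n}$ from $\ntime{\log^{k+1} n}$, hence $\conseq{k+1} \notin \ntime{\log^k n}$; this contradicts the previous paragraph and rules out the complete problem $P$. I expect the only delicate point to be the same one implicit in the \dpolylog proof, namely justifying that running $M$ on the reduced instance genuinely costs only $O(\log^k n)$ while preserving acceptance, which rests entirely on the polynomial length bound for $f$ together with the equivalence $x \in \conseq{k+1} \Leftrightarrow f(x) \in P$; everything else transfers verbatim from the deterministic case.
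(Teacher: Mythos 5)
Your proposal is correct and follows essentially the same approach as the paper: the paper explicitly obtains this theorem by taking the \dpolylog proof and substituting $\conseq{k+1}$ for $\initial{k+1}$ and Theorem~\ref{strictNonD} for Theorem~\ref{strictDet}, which is precisely what you do. Your quantitative step $O(\log^k(n^{k'})) = O((k'\log n)^k) = O(\log^k n)$ and the final contradiction with $\conseq{k+1} \notin \ntime{\log^k n}$ match the paper's argument line for line.
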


Moreover, using the problems $\genconseq{k}{l}$ and $\genexactly{k}{l}$ together with its complements and Theorems~\ref{ThForSigma} and~\ref{ThForPi}, it is easy to prove that the same holds for every individual level of the polylogarithmic time hierarchy.

\begin{theorem}
For every $m \geq 1$, $\sigmaplog{m}$ and $\piplog{m}$ do \emph{not} have complete problems under deterministic polynomially bounded many-one reductions. 
\end{theorem}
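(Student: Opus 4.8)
The plan is to reduce this statement to the hierarchy theorems already established in Section~\ref{hierarchies}, exactly as the preceding two theorems for \dpolylog and \npolylog were reduced to Theorems~\ref{strictDet} and~\ref{strictNonD}. The argument is a proof by contradiction that exploits the fact that a polynomially bounded many-one reduction only inflates the input size polynomially, and a polynomial blow-up leaves the polylogarithmic running time unchanged (since $\log^k(n^{k'}) = (k'\log n)^k = O(\log^k n)$). First I would fix a level $m$ and treat $\sigmaplog{m}$; the case of $\piplog{m}$ is symmetric and will be handled by passing to complements.

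The key steps, in order, are as follows. Suppose for contradiction that $P$ is complete for $\sigmaplog{m}$ under deterministic polynomially bounded many-one reductions. Since $P \in \sigmaplog{m} = \bigcup_{k} \atime{\log^k n}{m}$, there is a fixed $k$ with $P \in \atime{\log^k n}{m}$. Now I would choose the appropriate hard problem according to the parity of $m$: for odd $m > 2$ take $\genconseq{k+2}{(m-1)/2}$, for even $m > 2$ take $\genexactly{k+2}{(m-2)/2}$, and for the base cases use $\conseq{k+2}$ when $m=1$ and $\exactlyonce{k+2}$ when $m=2$. In each case Lemmas~\ref{lemma:genconseq} and~\ref{lemma:genexactly} (respectively Lemmas~\ref{lemma:npolylog} and~\ref{lemma:sigma2}) place this problem in $\sigmaplog{m}$, so by completeness there is a polynomial-time computable $f$ with $x \in Q \iff f(x) \in P$. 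Writing $|f(x)| = |x|^{k'}$, the machine for $P$ together with $f$ decides $Q$ within $\atime{\log^k(n^{k'})}{m} = \atime{\log^k n}{m}$ alternating time, since the number of alternations is unchanged and the polynomial blow-up is absorbed by the logarithm. This contradicts the lower bounds established in Theorem~\ref{ThForSigma} (and Theorems~\ref{strictNonD}, \ref{Th:sigmatwo} for the base cases), which assert that $Q$ is not in $\atime{\log^k n}{m}$.

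For $\piplog{m}$ I would run the identical argument using the complements of $\genconseq{k+2}{l}$ and $\genexactly{k+2}{l}$, invoking Theorem~\ref{ThForPi} in place of Theorem~\ref{ThForSigma}; note that the complement of a problem in $\sigmaplog{m}$ lies in $\piplog{m}$, and a many-one reduction for $P$ transfers to its complement without changing the polynomial bound, so the same size-inflation estimate applies verbatim. The only subtlety worth spelling out is the gap in the hard-problem index: Theorem~\ref{ThForSigma} separates $\atime{\log^k n}{m}$ from $\atime{\log^{k+2} n}{m}$ (a jump of two in the exponent rather than one), which is why I select the witness problem at parameter $k+2$ rather than $k+1$; this guarantees the chosen $Q$ genuinely sits outside $\atime{\log^k n}{m}$ while still lying inside the level $\sigmaplog{m}$.

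I expect the main obstacle to be purely bookkeeping rather than conceptual: one must verify that the polynomial reduction composed with the alternating machine for $P$ yields a legitimate alternating machine deciding $Q$ with the same number of alternations $m$, so that the resulting membership $Q \in \atime{\log^k n}{m}$ is exactly the statement contradicted by the hierarchy theorem. The running-time estimate $O(\log^k(n^{k'})) = O(\log^k n)$ is the crux that makes the polynomially expensive reduction harmless in the polylogarithmic regime, and care is needed only to confirm that writing and reading the reduced instance $f(x)$ of polynomial length does not itself violate the alternating-time budget — but since $P$'s machine has random access and only reads $O(\log^k |f(x)|)$ many input cells, this too is absorbed.
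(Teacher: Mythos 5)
Your proposal is correct relative to the paper and takes essentially the same route: the paper's own proof of this theorem is exactly the transfer you describe, namely re-running the \dpolylog/\npolylog completeness argument with $\genconseq{k}{l}$ and $\genexactly{k}{l}$ (and their complements for $\piplog{m}$) against the lower bounds of Theorems~\ref{ThForSigma} and~\ref{ThForPi}, with the size inflation killed by $\log^k(n^{k'}) = O(\log^k n)$. Your choice of parameter $k+2$ is sound bookkeeping and even slightly generous, since the inner claims in the proof of Theorem~\ref{ThForSigma} already place the $k+1$ witness outside $\atime{\log^k n}{m}$; the $k+2$ version then fails a fortiori by monotonicity, and your treatment of $\piplog{m}$ via complements matches the paper's intent. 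One caveat: the ``absorption'' step in your last paragraph --- running $P$'s machine on $f(x)$ without charging for producing $f(x)$ --- is inherited verbatim from the paper's \dpolylog proof, but your added justification (random access plus only $O(\log^k |f(x)|)$ reads) does not by itself close it, since a queried bit of $f(x)$ is not available in polylogarithmic time merely because $f$ is polynomial-time computable; as this leap is the paper's own and not an invention of yours, your reconstruction is faithful to the paper's argument.
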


\section{The polylogarithmic time hierarchy is strict}\label{polyloghierarchyisstrict}

The problem of whether the polynomial time hierarchy is proper, is one of the oldest and more difficult open problems in complexity theory. In this section, we show that by contrast, it is not difficult to see that the analogous polylogarithmic time hierarchy is indeed proper.

We prove next that the second level of the $\sigmaplog{m}$ hierarchy is strictly includes the first level. This showcases how we can use the results in Section~\ref{hierarchies} to prove that the whole hierarchy does not collapses. 

\begin{theorem}\label{strict1}
$\sigmaplog{1}$ is strictly included in $\sigmaplog{2}$.
\end{theorem}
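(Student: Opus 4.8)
The plan is to establish the two directions separately. The inclusion $\sigmaplog{1} \subseteq \sigmaplog{2}$ is immediate: every computation with at most one alternation is a computation with at most two alternations, so $\polysigma{1} \subseteq \polysigma{2}$, and hence $\sigmaplog{1} \subseteq \sigmaplog{2}$ by the capturing result of Theorem~\ref{pedsoplog3} (equivalently, a $\Sigma_1^{\mathit{plog}}$ sentence is a degenerate $\Sigma_2^{\mathit{plog}}$ sentence with empty universal block). For strictness I would exhibit the concrete witness $\exactlyonce{k}$ for some fixed $k \ge 1$ (Problem~\ref{p3}). Membership $\exactlyonce{k} \in \sigmaplog{2}$ is already in hand, since Problem~\ref{p3} gives an explicit $\sigmaplog{2}$ sentence for it (alternatively, Lemma~\ref{lemma:sigma2} with Theorem~\ref{pedsoplog3} places it in $\polysigma{2}$). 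All the work therefore lies in proving $\exactlyonce{k} \notin \sigmaplog{1}$, that is, by Theorem~\ref{pedsoplog3}, $\exactlyonce{k} \notin \mnpolylog = \bigcup_j \ntime{\log^j n}$.

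For this lower bound I would run an adversary (fooling) argument which, unlike Theorem~\ref{Th:sigmatwo}, must defeat a purely nondeterministic machine for every exponent $j$ simultaneously. Suppose toward a contradiction that some nondeterministic random-access machine $M$ decides $\exactlyonce{k}$ within time $c\log^j n$ for fixed constants $c,j$. Choose $n$ large and set $L = \lceil \log n\rceil^k$. Take the input $s = 0^{L}1^{\,n-L}$, whose only occurrence of $0^{L}$ starts at position $0$ (the window starting at position $1$ already meets the first $1$); thus $s \in \exactlyonce{k}$, and $M$ has an accepting computation $\rho$ reading at most $c\log^j n$ input cells.

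The heart of the argument is a counting step. Partition the ones-region (the positions strictly after the first block) into disjoint windows of length $L$. There are roughly $n/\log^{k} n$ such windows, while $\rho$, reading at most $c\log^j n$ cells, can intersect at most $c\log^j n$ of these pairwise-disjoint windows; since $n$ dominates $c\log^{\,j+k} n$ for all large $n$, some window $W$ is left entirely unread by $\rho$. Flipping every bit of $W$ to $0$ yields a string $s'$ containing a second, separate block $0^{L}$, so $s' \notin \exactlyonce{k}$. Because $\rho$ never inspects a cell of $W$, it reads exactly the same values on $s$ and $s'$ and is therefore still an accepting computation of $M$ on $s'$, so $M$ accepts $s'$ --- contradicting that $M$ decides $\exactlyonce{k}$. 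Since this refutes each fixed $j$ (by choosing $n$ large enough), we obtain $\exactlyonce{k} \notin \mnpolylog$, and strictness of $\sigmaplog{1} \subsetneq \sigmaplog{2}$ follows.

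The main obstacle, and the point at which the argument genuinely departs from the $\Sigma_2$ lower bounds of Section~\ref{hierarchies}, is the ``exactly once'' condition: it is inherently co-nondeterministic, so a nondeterministic machine cannot certify the \emph{absence} of a second occurrence in sub-linear time. Technically this forces us to find an \emph{entire} unread window of $L$ consecutive cells --- not merely a single unread cell as in Lemma~\ref{lemma:npolylog} --- which is precisely what the counting step delivers once $n$ outgrows $\log^{\,j+k} n$. A minor additional care is needed to place $W$ strictly inside the ones-region, so that the created run of zeros is bounded by $1$s on both sides and hence forms a genuine second occurrence rather than an extension of the first block.
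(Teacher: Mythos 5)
Your proof is correct, but it takes a genuinely different route from the paper's. The paper does not use $\exactlyonce{k}$ here at all: it argues via the family $\conseq{k}$, combining the lower bound already proved for Theorem~\ref{strictNonD} (for every $k$, $\conseq{k+1} \notin \ntime{\log^{k}n}$) with a new key lemma (Lemma~\ref{strictNonDalternations}) stating that \emph{every} $\conseq{k}$ lies in $\atime{\log^2 n}{2}$, with exponent $2$ uniform in $k$: a $\Sigma_2$ machine existentially guesses the two endpoints $p_1,p_2$ at distance $\lceil\log n\rceil^k$ (the threshold being computable in time $O(\log^2 n)$) and then universally checks all positions in between. You instead exhibit a single fixed witness and prove the stronger lower bound $\exactlyonce{k} \notin \mnpolylog$, i.e.\ against every exponent $j$ simultaneously; your counting step is sound, since an accepting run on $0^{L}1^{n-L}$ (with $L=\lceil\log n\rceil^k$) reads at most $c\log^j n$ cells and therefore misses one of the roughly $n/L$ disjoint length-$L$ windows of the ones-region, and flipping that window preserves the run while destroying membership. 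What each approach buys: the paper's route recycles its hierarchy machinery and scales level by level to Theorem~\ref{strict2} via $\genconseq{k}{l}$ and $\genexactly{k}{l}$, whereas your route produces one concrete problem in $\sigmaplog{2}\setminus\sigmaplog{1}$. Notably, your route is also immune to a subtlety in the paper's own argument: each individual $\conseq{k}$ \emph{does} belong to $\sigmaplog{1}$ (it is in $\ntime{\log^k n}$), so the two facts the paper cites do not by themselves yield a single separating problem --- one would additionally have to diagonalize, e.g.\ let the exponent grow slowly and unboundedly with $n$, noting that the uniform $\atime{\log^2 n}{2}$ algorithm still applies while the fooling argument defeats each fixed $\ntime{\log^j n}$. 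Your ``exactly once'' witness avoids this entirely, because its inherently co-nondeterministic part already defeats nondeterministic machines of every exponent at a fixed $k$.
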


Theorem~\ref{strict1} is a direct consequence of the fact that for every $k$, $\conseq{k+1}$  is \emph{not} in $\ntime{\log^{k}n}$ (see proof of Theorem~\ref{strictDet}) and the following lemma, which shows that in the second level of the polylogarithmic time hierarchy we can decide $\conseq{k}$ (see Problem~\ref{p1}) in quadratic logarithmic time, independently of the value of $k$. 

\begin{lemma}\label{strictNonDalternations}
For every $k$, $\conseq{k}$ (see Problem~\ref{p1}) can be decided in $\atime{\log^{2} n}{2}$.
\end{lemma}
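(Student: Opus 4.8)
The plan is to exhibit a single random-access alternating Turing machine $M$ that decides $\conseq{k}$ within the stated bound for every fixed $k$, using one extra alternation to eliminate the $\log^k n$ factor that the purely existential machine of Lemma~\ref{lemma:npolylog} pays for its deterministic scan. Concretely, $M$ first, in an existential state, guesses the address $i$ of a candidate starting cell of the zero-block inside the encoding of $R_0^{\mA}$ (reusing the $O(\log n)$-time position-guessing of Lemma~\ref{lemma:npolylog}), subject to the guard that the whole block $[i,\, i+\lceil\log n\rceil^k)$ still lies within that encoding. Then $M$ switches to a universal state and, rather than reading all $\lceil\log n\rceil^k$ cells one by one, universally guesses a single address $j$ and verifies the implication $(i \le j < i+\lceil\log n\rceil^k) \rightarrow \mathit{Zero}(j)$, where $\mathit{Zero}(j)$ is the single-cell test that position $j$ encodes a $0$. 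Since $M$ uses one existential block followed by one universal block, it sits in $\atime{\cdot}{2}$, and it accepts exactly when there is a run of $\lceil\log n\rceil^k$ consecutive zeros, i.e.\ on precisely $\conseq{k}$.

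The heart of the argument is the running-time analysis, which must come out as $O(\log^2 n)$ uniformly in $k$. Before the alternation, $M$ computes $n$ and $\lceil\log n\rceil$ by binary search against the endmark $\triangleleft$; this uses $O(\log n)$ probes, each writing an address of length $O(\log n)$, for a total of $O(\log^2 n)$, and this is the only step reaching quadratic cost. From $\lceil\log n\rceil$ the machine obtains the threshold $\lceil\log n\rceil^k$ by $k-1$ multiplications of numbers of $O(k\log\log n)$ bits; for constant $k$ this costs $O(\mathrm{poly}(\log\log n)) = o(\log n)$, which is exactly where the independence from $k$ originates. Each quantifier phase then only writes an $O(\log n)$-bit address, performs one addition and comparison of $O(\log n)$-bit numbers to test the guard $i \le j < i+\lceil\log n\rceil^k$, and reads one input cell, all in $O(\log n)$ time. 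Summing, the total time is dominated by the binary search and is $O(\log^2 n)$ regardless of $k$.

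The key idea, and the step I expect to need the most care, is replacing the deterministic $\Theta(\log^k n)$-time verification that every cell of the block is $0$ by the universally quantified single-cell check. The subtlety is that the universal branch ranges over an a priori unbounded set of guessed addresses while only the $\lceil\log n\rceil^k$ relevant ones matter: addresses $j$ outside the block satisfy the guard vacuously and must be accepted, so one encodes the bounded quantifier $\forall j \in [i,\, i+\lceil\log n\rceil^k)$ as an unbounded universal guess of $j$ followed by the implication above. I would also check the boundary bookkeeping carefully, namely that the existential guard requiring the block to fit inside the encoding of $R_0^{\mA}$ correctly rejects inputs too short to contain $\lceil\log n\rceil^k$ consecutive zeros, and that the offset of the $R_0$-block within $\mathrm{bin}(\mA)$ and the conversion between string positions and cell addresses are handled within the $O(\log n)$ per-phase budget. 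Once these details are in place, correctness is immediate and the bound $\atime{\log^2 n}{2}$, independent of $k$, follows; combined with the fact that $\conseq{k+1}\notin\ntime{\log^k n}$ from the proof of Theorem~\ref{strictNonD}, this yields Theorem~\ref{strict1}.
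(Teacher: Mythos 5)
Your proposal is correct and follows essentially the same route as the paper's proof: existentially guess the location of the length-$\lceil\log n\rceil^k$ block, then switch to a universal state to verify each cell of the block with a single-cell check, observing that the only nontrivial arithmetic (computing $\lceil\log n\rceil^k$ by repeated multiplication) fits in $O(\log^2 n)$ time independently of $k$. The only cosmetic difference is that the paper guesses both endpoints $p_1,p_2$ of the block rather than the start address alone, and your accounting of the multiplication cost as $o(\log n)$ is slightly sharper than the paper's $O(\log^2 n)$ bound.
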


\begin{proof}
We show that there is an alternating random-access Turing machine $M$ which decides $\conseq{k}$ in time $O(\log^2n)$ which alternates only once from existential to universal states. 
Assume the input tape of $M$ encodes a word model $\mA$ of signature $\pi$, i.e., a binary string.
$M$ works by first guessing (in existential states) two position $p_1$ and $p_2$ within the positions encoding $R_0^{\mA}$ such that $p_1 < p_2$, $p_2 \leq n$ and $p_2 - p_1 = (\lceil\log n\rceil)^k$, and then switching to a universal state and checking if for every position $p_j$ which is between $p_1$ and $p_2$, the value of the input tape in position $p_j$ is $1$. Regarding the time complexity of this procedure, the key point to observe here is that clearly the required arithmetic operations can be computed by $M$ in time $O(\log^2 n)$. In particular $M$ can compute the binary representation of $\lceil\log n\rceil^k$ in time $O(\log^2 n)$. This is the case since $M$ can compute $\lceil \log n \rceil$ in binary in time $O(\log n)$ and then can, in time $O(\log^2 n)$, multiply $\lceil \log n \rceil$ by itself $k$-times.   
\qed
\end{proof}

The previous argument can be generalized to the whole polylogarithmic time hierarchy, obtaining that. 

\begin{theorem}\label{strict2}
For every $m \geq 1$, $\sigmaplog{m}$ is strictly included in $\sigmaplog{m+1}$.
\end{theorem}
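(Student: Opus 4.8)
The plan is to prove the general statement by lifting, one level at a time, the two ingredients that make the base case $m=1$ (Theorem~\ref{strict1}) work. Those ingredients are, on the one hand, a \emph{lower bound} asserting that a suitable parametrised problem is genuinely hard for $\sigmaplog{m}$, with difficulty growing in the parameter, and, on the other hand, a \emph{uniform upper bound} asserting that the whole family of these problems can be decided at level $m+1$ with a polylogarithmic time bound whose exponent is \emph{fixed}, independent of the parameter. The interplay between ``exponent must grow at level $m$'' and ``exponent stays bounded at level $m+1$'' is exactly what drives the separation, just as $\conseq{k+1}\notin\ntime{\log^k n}$ combined with the uniform bound of Lemma~\ref{strictNonDalternations} drives it for $m=1$.

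For the lower bound I would take, for odd $m=2l+1$, the problem $\genconseq{k}{l}$, and for even $m=2l+2$, the problem $\genexactly{k}{l}$. By Problems~\ref{p5} and~\ref{p6} these lie naturally in $\sigmaplog{m}$, and by the argument behind Theorem~\ref{ThForSigma} the member with first parameter $k$ is not decidable in $\atime{\log^{k'} n}{m}$ once $k'$ is small relative to $k$. This plays the role that $\conseq{k+1}\notin\ntime{\log^k n}$ played in the base case.

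For the uniform upper bound I would prove the analogue of Lemma~\ref{strictNonDalternations}: for every $k$ and fixed $l$, the problem $\genconseq{k}{l}$ (resp.\ $\genexactly{k}{l}$) can be decided in $\atime{\log^{c} n}{m+1}$ for a fixed constant $c$ that does not depend on $k$. The mechanism is the one in Lemma~\ref{strictNonDalternations}: instead of materialising the $\lceil\log n\rceil^{k}$-length blocks of zeros, which forces the time exponent to grow with $k$, the machine guesses only the endpoints $p_1,p_2$ of each block existentially, verifies the arithmetic relation $p_2-p_1=\lceil\log n\rceil^k$ in time $O(\log^2 n)$, and then uses the one additional alternation to check universally that every intermediate cell carries a $0$. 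Threading this trick through the $l$ nested block quantifiers of Problems~\ref{p5} and~\ref{p6} keeps the time exponent fixed while raising the number of alternations by exactly one, landing in level $m+1$.

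Finally, to convert the two bounds into a single separating language I would use a slicing/padding construction: define $L^{\ast}$ so that on inputs carrying a cheaply recoverable marker for a parameter $k=k(n)$, allowed to grow slowly (say $k\le\log\log n$), the language coincides with the hard problem $P_k$ for that $k$. The uniform upper bound, applied slice by slice, places $L^{\ast}$ in $\atime{\log^{O(1)} n}{m+1}\subseteq\sigmaplog{m+1}$; conversely, if $L^{\ast}$ were in $\atime{\log^{j} n}{m}\subseteq\sigmaplog{m}$ for some $j$, then restricting to the (infinite) slice for a fixed $k$ large relative to $j$ would decide $P_k$ within $\atime{\log^{j} n}{m}$, contradicting the lower bound. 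I expect the main obstacle to be the uniform upper bound, namely showing that a single extra alternation genuinely collapses the $k$-dependence of the time exponent for the nested multi-block problems, while arranging the slicing so that the asymptotic lower bounds of Theorem~\ref{ThForSigma} transfer to each individual slice and the marker together with the $k$-dependent arithmetic stay within the polylogarithmic budget.
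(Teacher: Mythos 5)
Your proposal follows the paper's route in both of its main ingredients: the lower bounds you cite are exactly the facts the paper invokes (via the proofs of Theorems~\ref{strictNonD}, \ref{Th:sigmatwo} and~\ref{ThForSigma}), and your uniform upper bound---guessing only the block endpoints existentially, verifying the arithmetic relation $p_2-p_1=\lceil\log n\rceil^k$ in time $O(\log^2 n)$, and spending one further alternation on the universal zero-check---is precisely the content of the paper's Lemma~\ref{ultimolemma}, obtained by threading the trick of Lemma~\ref{strictNonDalternations} through the nested block quantifiers (the paper omits this proof as routine but cumbersome). Where you genuinely add something is the final step. The paper declares the theorem an ``easy consequence'' of the family-level facts, but those facts alone do not exhibit a single language in $\sigmaplog{m+1}\setminus\sigmaplog{m}$: every individual $\genconseq{k}{l}$ (resp.\ $\genexactly{k}{l}$) \emph{does} belong to $\sigmaplog{m}$, with time exponent growing in $k$, and since $\sigmaplog{m}$ is a union over all exponents, no fixed member of the family witnesses strictness. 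Your slicing/padding language $L^{\ast}$ with a slowly growing, cheaply recoverable parameter $k(n)$ supplies exactly this missing diagonalization, and the obstacles you flag are the right ones; they are manageable because the adversary arguments behind the lower bounds fool any fixed machine at all sufficiently large input lengths, hence on infinitely many lengths within each slice. One further point in your favor: you place the uniform bound at level $m+1$, one alternation above the problem's natural level, which is the only consistent reading---Lemma~\ref{ultimolemma} as literally stated keeps $2\cdot l+1$ (resp.\ $2\cdot l+2$) alternations, which for $k\ge 3$ would contradict the paper's own lower-bound facts (at $l=0$ it would place $\conseq{k}$ in $\ntime{\log^2 n}$, against Theorem~\ref{strictNonD}), so your $m+1$ bookkeeping is the version the argument actually needs.
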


Theorem~\ref{strict2} is an easy consequence of Lemma~\ref{ultimolemma} (see below) and the following facts:
\begin{itemize}
\item For $l \geq 0$, $\genconseq{k + 1}{l}$ is \emph{not} in $\atime{\log^{k} n}{2 \cdot l + 1}$ (see proofs of Theorems~\ref{strictNonD} and~\ref{ThForSigma}).
\item For $l \geq 0$, $\genexactly{k + 1}{l}$ is \emph{not} in $\atime{\log^{k} n}{2 \cdot l + 2}$ (see proofs of Theorem~\ref{Th:sigmatwo} and~\ref{ThForSigma}).
\end{itemize}

\begin{lemma}\label{ultimolemma}
For every $k \geq 1$ and $l \geq 0$, we get that:
\begin{enumerate}[a.]
\item $\genconseq{k}{l}$ can be decided in $\atime{\log^2 n}{2 \cdot l + 1}$.
\item $\genexactly{k}{l}$ can be decided in $\atime{\log^{2} n}{2 \cdot l + 2}$.
\end{enumerate}
\end{lemma}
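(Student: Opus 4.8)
The plan is to prove both parts by generalizing the alternating machine constructed in the proof of Lemma~\ref{strictNonDalternations}, whose two key ideas are that (i) every length occurring in the problem, in particular $w=\lceil\log n\rceil^k$, can be computed in binary in time $O(\log^2 n)$ \emph{independently of} $k$, and (ii) a deterministic scan verifying that a block of $w$ consecutive cells are all $0$ can be replaced by a single universal branching step, in which the machine universally selects one position of the block (an offset of only $O(\log\log n)$ bits), computes its address arithmetically, and reads that one cell. Neither cost depends on $k$, so each computation path runs in time $O(\log^2 n)$. I would carry this out by induction on $l$, taking $l=0$ as the base case: for part~(a) this is exactly Lemma~\ref{strictNonDalternations} (here $\genconseq{k}{0}=\conseq{k}$), and for part~(b) it is the analogous machine for $\genexactly{k}{0}=\exactlyonce{k}$ obtained by combining the improved machines for $\conseq{k}$ and $\noconseq{k}$ in the manner of Lemma~\ref{lemma:sigma2}, which then runs in $\atime{\log^2 n}{2}$.

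For the inductive step I would reuse the nested alternating machines that already witness Lemmas~\ref{lemma:genconseq} and~\ref{lemma:genexactly}, whose alternation pattern follows the quantifier prenex structure of the capturing $\soplog$ formulae of Problems~\ref{p5} and~\ref{p6} ($2l+1$ blocks for $\genconseq{k}{l}$ and $2l+2$ for $\genexactly{k}{l}$), and modify only their deterministic subroutines. Concretely, at each of the $l$ nesting levels the machine does not guess the $\lceil\log n\rceil^k$ pairs of the corresponding second-order variable $X_i$ explicitly; instead it keeps only the current interval $[a,b]$ (two $O(\log n)$-bit endpoints), and the universal selection of one of the $\lceil\log n\rceil^k$ equal subintervals is performed by universally guessing an index $j\in[1,\lceil\log n\rceil^k]$ of $O(\log\log n)$ bits and computing the endpoints of the $j$-th subinterval arithmetically in $O(\log^2 n)$ time. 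The innermost test $\forall z\,(Z(z)\rightarrow R_0(z))$ is realized as the universal branching of idea~(ii), and for $\genexactly{k}{l}$ the additional universal block that rules out any further occurrence is realized analogously, by universally selecting a candidate placement (an $O(\log n)$-bit anchor) and then existentially exhibiting a position carrying a $1$.

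Because each phase guesses only a constant number of $O(\log n)$-bit anchors and performs $O(1)$ arithmetic operations on $O(\log n)$-bit numbers, and because there are only $l+1=O(1)$ levels, the total time along any path is $O(\log^2 n)$, free of $k$; this is the accounting I would check most carefully, mirroring the time analysis in Lemma~\ref{strictNonDalternations}. The number of alternations is unchanged from Lemmas~\ref{lemma:genconseq} and~\ref{lemma:genexactly}: the new universal (respectively existential) branching steps occur strictly inside the scope of the innermost bounded first-order quantifiers already present in the formulae of Problems~\ref{p5} and~\ref{p6}, so they are absorbed into the final quantifier block and do not create a new alternation.

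The hardest point, and the one I would argue most carefully, is precisely this alternation bookkeeping: the naive machines of Lemmas~\ref{lemma:genconseq} and~\ref{lemma:genexactly} spend their $\log^{k+1}n$ running time both on guessing $\lceil\log n\rceil^k$ addresses per existential phase and on the final deterministic length-$w$ scan, and the whole point is to remove \emph{both} sources of $k$-dependence while keeping the alternation depth at exactly $2l+1$ and $2l+2$. The arithmetic derivation of the subinterval boundaries is what lets each existential phase guess only $O(\log n)$ bits, and folding the all-zeros test into the innermost universal phase is what keeps the depth fixed; verifying that these two modifications compose correctly across all $l$ levels, so that the block reached at the bottom is genuinely the one addressed by the nested indices $j_1,\dots,j_l$, is the essential step, and it is exactly analogous to, but more involved than, the single-level argument of Lemma~\ref{strictNonDalternations}.
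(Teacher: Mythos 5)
The paper never spells this proof out: it states only that the argument ``follows the general idea of the proof of Lemma~\ref{strictNonDalternations}'' and omits the details as technically cumbersome, so there is no line-by-line comparison to make. Your plan is exactly that declared strategy, and its two pillars are sound: computing $\lceil\log n\rceil^k$ (and its powers) in binary in time $O(\log^2 n)$ independently of $k$, and replacing every deterministic length-$\lceil\log n\rceil^k$ scan by a single universal branching over a short offset. One point you should make explicit rather than tacit: your arithmetic derivation of the $j$-th subinterval from the current endpoints is legitimate only because the $\soplog$ formulae of Problems~\ref{p5} and~\ref{p6} force, via $\mathit{SEQP}$, $\mathit{SEQ}$ and the $\mathit{min}/\mathit{max}$ constraints, the witnessing blocks to be \emph{adjacent tiles of equal length} $\lceil\log n\rceil^k$, so the entire nested witness is determined by a single anchor; under the looser English reading of ``non-overlapping substrings anywhere'' the equal-subinterval addressing would simply decide a different language. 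With the formula semantics, part~(a) goes through as you describe (indeed with room to spare, since your intermediate existential phases guess nothing).

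The genuine gap is your alternation bookkeeping for part~(b), and it bites precisely at the base case $l=0$. Under the paper's counting convention, $m$ in $\atime{f(n)}{m}$ counts quantifier \emph{blocks} (this is forced by $\polysigma{1}=\npolylog$), so a new existential branch following a universal block is \emph{not} absorbed: it is a new block. In Lemma~\ref{lemma:sigma2} the check ``every rival window contains a $1$'' is performed by a deterministic scan \emph{inside} the universal phase, which is affordable in time $\lceil\log n\rceil^k$ but costs $\Theta(\log^k n)$ and hence breaks your $O(\log^2 n)$ budget for $k>2$; your replacement --- universally select the rival window, then \emph{existentially} exhibit a $1$ in it --- yields the pattern $\exists\,\forall\,\exists$, i.e.\ three blocks, whereas $\genexactly{k}{0}=\exactlyonce{k}$ is claimed to lie in $\atime{\log^2 n}{2}$. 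So your assertion that the new branchings ``do not create a new alternation'' is correct for part~(a) (universal folded into universal) but false for the innermost existential step of part~(b). For $l\geq 1$ your construction survives anyway, because the arithmetic addressing empties the intermediate existential phases and the resulting $\exists\forall\exists$ machine uses at most $3\leq 2l+2$ blocks; note also that the $(b)$, $l=0$ case is never actually invoked downstream (Theorem~\ref{strict1} separates the first two levels via $\conseq{k}$, not $\exactlyonce{k}$). But as a proof of the lemma \emph{as stated}, the case $l=0$ of part~(b) is unjustified in your write-up, and you should either give a genuinely two-block machine for $\exactlyonce{k}$ in time $O(\log^2 n)$ --- which does not follow from your ideas (i) and (ii) --- or flag the discrepancy with the paper's alternation count.
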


The proof of Lemma~\ref{ultimolemma} follows the general idea of the proof of Lemma~\ref{strictNonDalternations}. Due to space limitations, we omit this easy, but technically cumbersome proof.

\section{Concluding Remarks} \label{sec:schluss}

In this paper we showed that none of the classes $\polysigma{m}$ and $\polypi{m}$ ($m \in \mathbb{N}$) in the polylogarithmic time hierarchy has a complete problem. This result follows from the existence of proper hierarchies inside each of the classes. The proof that such hierarchies exist is constructive by defining concrete problems parameterized by $k \in \mathbb{N}$ for each class. For the definition of these concrete problems we exploit the logics capturing $\polysigma{m}$ and $\polypi{m}$, respectively. 

\bibliographystyle{abbrv}
\bibliography{SOpolylog}

\begin{thebibliography}{10}

\bibitem{babai:stoc2016}
L.~Babai.
\newblock Graph isomorphism in quasipolynomial time.
\newblock In {\em Proceedings of the forty-eighth annual ACM symposium on
  Theory of Computing (STOC 2016)}, pages 684--697, 2016.

\bibitem{EbbinghausF95}
H.-D. Ebbinghaus and J.~Flum.
\newblock {\em Finite model theory}.
\newblock Perspectives in Mathematical Logic. Springer, 1995.

\bibitem{FerrarottiGST18}
F.~Ferrarotti, S.~Gonz{\'{a}}lez, K.-D. Schewe, and J.~M. {Turull Torres}.
\newblock The polylog-time hierarchy captured by restricted second-order logic.
\newblock In {\em 20th International Symposium on Symbolic and Numeric
  Algorithms for Scientific Computing, {SYNASC} 2018, Timisoara, Romania,
  September 20-23, 2018}, pages 133--140. {IEEE}, 2018.

\bibitem{FGST18}
F.~Ferrarotti, S.~Gonz{\'{a}}lez, K.-D. Schewe, and J.~M. {Turull Torres}.
\newblock The polylog-time hierarchy captured by restricted second-order logic.
\newblock {\em CoRR}, abs/1806.07127, 2018.

\bibitem{FerrarottiGST19}
F.~Ferrarotti, S.~Gonz{\'{a}}lez, K.-D. Schewe, and J.~M. {Turull Torres}.
\newblock A restricted second-order logic for non-deterministic
  poly-logarithmic time.
\newblock {\em To appear in the Logic Journal of the {IGPL}}, 2019.

\bibitem{FerrarottiGTBV19}
F.~Ferrarotti, S.~Gonz{\'{a}}lez, J.~M. {Turull Torres}, J.~{Van den Bussche},
  and J.~Virtema.
\newblock Descriptive complexity of deterministic polylogarithmic time.
\newblock In {\em Logic, Language, Information, and Computation - 26th
  International Workshop, WoLLIC 2019, Utrecht, The Netherlands, July 2-5,
  2019, Proceedings}, volume 11541 of {\em Lecture Notes in Computer Science},
  pages 208--222. Springer, 2019.

\bibitem{FerrarottiGTBV19b}
F.~Ferrarotti, S.~Gonz{\'{a}}lez, J.~M. {Turull Torres}, J.~{Van den Bussche},
  and J.~Virtema.
\newblock Descriptive complexity of deterministic polylogarithmic time and
  space.
\newblock {\em Submitted for publication}, 2019.

\bibitem{Immerman99}
N.~Immerman.
\newblock {\em Descriptive complexity}.
\newblock Graduate texts in computer science. Springer, 1999.

\bibitem{barrington:sct1992}
D.~A. {Mix Barrington}.
\newblock Quasipolynomial size circuit classes.
\newblock In {\em Proceedings of the Seventh Annual Structure in Complexity
  Theory Conference, Boston, Massachusetts, USA, June 22-25, 1992}, pages
  86--93. {IEEE} Computer Society, 1992.

\bibitem{barrington:jcss1990}
D.~A. {Mix Barrington}, N.~Immerman, and H.~Straubing.
\newblock On uniformity within {NC}$^1$.
\newblock {\em J. Comput. Syst. Sci.}, 41(3):274--306, 1990.

\end{thebibliography}

\end{document}